\newtheorem{thm}{Theorem}
\newtheorem{lem}[thm]{Lemma}
\newtheorem{prop}[thm]{Proposition}
\newdefinition{defn}[thm]{Definition}
\newdefinition{exmp}[thm]{Example}
\newdefinition{rem}[thm]{Remark}
\newdefinition{prob}[thm]{Problem}
\newdefinition{prin}[thm]{Principle}
\newdefinition{alg}{Algorithm}
\newdefinition{notation}[thm]{Notation}
\DeclareMathOperator{\rank}{rank}
\DeclareMathOperator{\Const}{Const}
\DeclareMathOperator{\Wrn}{Wr}
\DeclareMathOperator{\lead}{lead}
\newcommand{\Kdiff}{K^{\operatorname{diff}}}
\newcommand{\kacl}{k^{\operatorname{acl}}}
\newcommand{\C}{\mathbb C}
\begin{document}
\setlength{\abovedisplayskip}{6pt}
\setlength{\belowdisplayskip}{5pt}
\setlength{\abovedisplayshortskip}{0pt}
\setlength{\belowdisplayshortskip}{0pt}
\begin{frontmatter}
\title{Computing all identifiable functions  of parameters for ODE models}
\author[AO]{Alexey Ovchinnikov}\ead{aovchinnikov@qc.cuny.edu}
\author[AP]{Anand Pillay}\ead{Anand.Pillay.3@nd.edu}
\author[GP]{Gleb Pogudin}\ead{gleb.pogudin@polytechnique.edu}
\author[TS]{Thomas Scanlon}\ead{scanlon@math.berkeley.edu}
\address[AO]{CUNY Queens College, Department of Mathematics,
65-30 Kissena Blvd, Queens, NY 11367, USA \\ 
CUNY Graduate Center, Mathematics and Computer Science, 365 Fifth Avenue,
New York, NY 10016, USA}
\address[AP]{University of Notre Dame, Department of Mathematics, Notre Dame, IN 46556, USA}
\address[GP]{LIX, CNRS, \'Ecole Polytechnique, Institute Polytechnique de Paris, 1 rue Honor\'e d'Estienne d'Orves, 91120, Palaiseau, France\fnref{otheraddress}}
\address[TS]{University of California, Berkeley, Department of Mathematics, Evans Hall, Berkeley, CA 94720-3840, USA} 
\fntext[otheraddress]{During the preparation of this manuscript, G. Pogudin also worked at the Courant Institute of Mathematical Sciences, New York University, New York, NY 10012, USA and at the Higher School of Economics, Faculty of Computer Science, Moscow, 109028, Russia}

\begin{abstract}
Parameter identifiability is a structural property of an ODE model for recovering the values of parameters from the data (i.e., from the input and output variables). 
This property is a prerequisite for meaningful parameter identification in practice.
In the presence of nonidentifiability, it is important to find all functions of the parameters that are identifiable.
The existing algorithms check whether a given function of parameters is identifiable or, under the solvability condition, find all identifiable functions.  However, this solvability condition is not always satisfied, which presents a challenge.
Our first main result is an algorithm that computes all identifiable functions without any additional assumptions, which is the first such algorithm as far as we know.
Our second main result concerns the identifiability from multiple experiments (with generically different inputs and initial conditions among the experiments).
For this problem, we
 prove
that the set of functions identifiable from multiple experiments is what would actually  be computed by input-output equation-based  algorithms  
(whether or not the solvability condition is 
fulfilled), which was not known before. 
We give an algorithm that not only finds these functions but also provides an upper bound for the number of experiments to be performed to identify these functions.
We provide an implementation of the presented algorithms.
\end{abstract}
\begin{keyword}
parameter identifiability\sep multiple  experiments\sep input-output equations\sep differential algebra \sep characteristic sets
\end{keyword}
\end{frontmatter}

\section{Introduction}
In this paper, we study structural parameter identifiability of rational ODE systems.
Roughly speaking, a parameter is structurally identifiable if its value can be recovered from the observations assuming continuous noise-free measurements and sufficiently exciting inputs (also referred to as the persistence of excitation, see~\cite{LG94, V18}).
If not all of the parameters of a model are identifiable, the next question usually is what  rational
functions $h(\bar{\mu}) \in \C(\bar{\mu})$ of the parameters $\bar{\mu}$ are identifiable.
The knowledge of identifiable functions can be used in 
these
ways:
\begin{itemize}[itemsep=-0.05in,topsep=0.03in]
\item 
If the functions of interest to the modeler are identifiable, then the lack of identifiability of some parameters is not an issue (sometimes, this is even an advantage~\cite{S17}).
\item 
Identifiable functions can be used to find an identifiable reparametrization of the model~\cite{MED2009, MS2014, BD16}, which is a way of improving the model.

\item Knowledge of identifiable functions can be used to  discover 
parameter transformations that preserve the input-output behavior  and thus could provide additional insights to the modeler (see Section~\ref{subsec:sf}).

\end{itemize}

To the best of our knowledge, all existing approaches to computing identifiable functions extract them from the coefficients of input-output equations (going back to~\cite{OllivierPhD}; for a concise summary, we refer to~\cite[Introduction and Algorithm~II.1]{OPT19}).
To conclude that the coefficients of an input-output equation are identifiable, one can, for example, verify if the solvability condition~\cite[Remark~3]{Saccomani2003} holds for the equation.
The condition can be checked by an algorithm (see \cite[Section~4.1]{DVJBNP01} and~\cite[Section~3.4]{MXPW11}) and holds for some classes of models~\cite{OPT19}. 
If the condition does not hold, then this approach of finding identifiable functions of parameters is not applicable  but is still used by some of the existing software packages, including DAISY and COMBOS.
This is a reason why these tools may miss the non-identifiability of some of the parameters in such systems.
For a simple example of a system for which this condition does not hold, see~\cite[Example~2.14]{HOPY2018} (see also Sections~\ref{subsec:sf}  and~\ref{sec:LV-mult}). 

Therefore, we are not aware of any prior algorithm that can compute all identifiable functions (e.g., by computing generators of the field of identifiable functions).
Note that  
 some
existing software 
 can, for any fixed rational function of parameters,  
check whether it is identifiable~\cite[Remark~1]{ident-compare}.  However, looking for all identifiable functions, it is not known in advance what functions of parameters to test for identifiability, so this approach cannot be used as an algorithm.

 The above issues motivate
the following  
questions (which remained unanswered as far as we know) we study in this paper:
\begin{enumerate}[label = (Q\arabic*),itemsep=-0.05in,topsep=0.03in]
    \item\label{q:single} How to find the identifiable functions of a model even if the solvability condition does not hold?
    \item\label{q:multiple} If the solvability condition does not hold, what is the meaning of the coefficients of the input-output equations?
\end{enumerate}
Our main results are the following answers to these questions:
\begin{itemize}[itemsep=-0.02in,topsep=0.02in]
    \item We answer~\ref{q:single} by providing  Algorithm~\ref{alg:main} for computing  generators of the field of identifiable functions.
     This is the first such
    algorithm  and is based on 
     our theory established in Theorem~~\ref{prop:charset}.
    
    \item 
    We show in 
    Theorem~\ref{thm:me} 
    that the coefficients of the input-output equations are  generators of the field of functions identifiable \emph{from multiple experiments} (with generically different inputs and initial conditions  among the experiments~\cite{GenSSI2}), thus answering~\ref{q:multiple}.
    To the best of our knowledge, this natural interpretation of the coefficients of the input-output equations has not been known before despite the popularity of this method.
    Furthermore, we use this to derive 
    the first
    upper bound for the number of  such experiments, which can be used further for experimental design (e.g., for protocols such as~\cite[Section~7]{FN09}).
    The multi-experiment setup is natural, for example, for models involving constant inputs~\cite{VECB}.
\end{itemize}

The theoretical basis for this work  
 uses differential algebra and commutative algebra. 
We employ characteristic sets, a tool from computational algebra. 
The key difference with the prior algorithms based on  characteristic sets is that we provide a mathematically sound way to treat a typically ignored case in which the solvability condition is not satisfied.
To achieve this, we analyze the Wronskians of the monomials of characteristic sets using methods from linear algebra.
Our results are informed by model theory in the sense of mathematical logic, though this does not appear explicitly in our presentation. We  
 elaborate on
this connection in a follow-up work~\cite{second_paper}.
Additional related results on identifiability using input-output equations and differential algebra include  \cite{JNB2019, Saccomani2003,Meshkat14res,Meshkat18,DAISY,DAISY_IFAC,DAISY_MED,COMBOS,MRS2016,intdiff}.

The rest of the paper is organized as follows.
Section~\ref{sec:prelim} contains  definitions and notation that we use.
In Section~\ref{sec:ce}, we give our algorithm for computing the generators of the field of identifiable functions,  which is based on the theory we present in this section as well.
Section~\ref{sec:me} is on  theory for multi-experimental identifiability.
We illustrate our methods with examples in Section~\ref{sec:examples}.  We prove our main results in~\ref{sec:pfs}. In the other remaining appendices, we present and prove correctness of two algorithms that are used in our main algorithmic contributions and also provide a mathematical discussion, illustrated with examples, on our main theorems.

We have implemented Algorithm~\ref{alg:main} and an algorithm for computing the bound from Theorem~\ref{thm:me} (as in Remark~\ref{rem:alg_bound}) in Maple.
A {\sc Maple} implementation together with the examples from Section~\ref{sec:examples} is available at~\url{https://github.com/pogudingleb/AllIdentifiableFunctions}.  This implementation has recently been incorporated into a freely available web app~\url{https://maple.cloud/app/5710317752942592/SIAN}.


\section{Basic notions and notation}\label{sec:prelim}
In this section, we will present the basic notions and notation from differential algebra and parameter identifiability that are essential for our main results.  

\subsection{Background and notation from differential algebra}
Differential algebra has been a standard theory behind identifiability, and we will simply fix the basic notation. General references include \cite{Kol,Ritt}.
For other presentations of these concepts in the context of control theory, see~\cite{LG94, Jiafan2009, FJ, Saccomani2003}.
\begin{notation}[Differential rings and ideals]\label{not:diffrings}
\begin{enumerate}[label = (\alph*),itemsep=-0.04in]
\item[]
\item
A {\em differential ring} $(R,\,')$ is a commutative ring with a derivation $'\!\!:R\to R$, that is, a map such that, for all $a,b\in R$, $(a+b)'=a'+b'$ and $(ab)'=a'b+ab'$. 
 A {\em differential field} is a differential ring that is a field.
For  $i>0$,  $a^{(i)}$ denotes the $i$-th order derivative of $a \in R$.
 $\Const(K)$ denotes the field of constants of a differential field $K$.
\item The {\em ring of differential polynomials} in the variables $z_1,\ldots,z_n$ over a differential field $(K,\,')$ is the ring \[K\big[z_j^{(i)}\mid i\geqslant 0,\, 1\leqslant j\leqslant n\big]\] with a derivation defined on the ring by $\big(z_j^{(i)}\big)' := z_j^{(i+1)}$. This differential ring is denoted by $K\{z_1,\ldots,z_n\}$. 
\item For differential fields $F\subset L$ and $a_1,\ldots,a_n \in L$, the smallest differential subfield of $L$ that contains $F$ and $a_1,\ldots,a_n$ is denoted by $F\langle a_1,\ldots,a_n\rangle$.
\item For a commutative ring $R$ and a subset $F\subset R$, the smallest ideal containing $F$ is denoted by $(F)$.
\item An ideal $I$ of a differential ring $(R,\,')$ is called a {\em differential ideal} if, for all $a \in I$, $a'\in I$. For $F\subset R$, the smallest differential ideal containing $F$ is denoted by $[F]$. 
  \item 
  For an ideal $I$ and element $a$ in a ring $R$, we denote $I \colon a^\infty = \{r \in R \mid \exists n\colon a^nr \in I\}$.
  This set is also an ideal in $R$. 
This will be useful for dealing with ODE systems in which  (non-polynomial) rational functions appear.
  \item\label{def:Wr} For $a_1,\ldots,a_n$ in a differential ring $R$, we denote the 
  $n\times n$ matrix with $(i,j)$-entry $a_j^{(i-1)}$ by $\Wrn(a_1,\ldots,a_n)$ and call it the {\em Wronskian} of $a_1,\ldots,a_n$. For example, \[\Wrn(a_1,a_2) =\begin{pmatrix}a_1&a_2\\  a_1'& a_2'\end{pmatrix}.\]

\end{enumerate}
\end{notation}

The rest of the definitions in this section generalize Gaussian elimination to systems of non-linear ODEs.
Differential rankings are analogous to ordering of variables in Gaussian elimination; characteristic sets and
presentations are analogous to row echelon form and reduced row echelon forms, respectively.

\begin{defn}\label{def:diffranking}
  A {\em differential ranking} is a total order $>$ on $Z := \{z_j^{(i)}\mid i\geqslant 0,\, 1\leqslant j\leqslant n\}$ satisfying:
    \[\forall\ x \in Z\ \ x' > x\quad \text{and}\quad
     \forall\ x, y \in Z\ \ (x >y\implies x' > y'). \]
\end{defn}

\begin{notation}\label{not:rank}
  For  $f \in K\{z_1,\ldots,z_n\} \backslash K$ and a differential ranking,
  \begin{itemize}[itemsep=-0.02in,topsep=0.01in]
    \item $\lead(f)$ is the element of $\big\{z_j^{(i)} \mid i \geqslant 0, 1 \leqslant j \leqslant n\big\}$ of the highest rank appearing in $f$. 
    This is partly analogous to the leading variable in Gaussian elimination.
    \item The leading coefficient of $f$ viewed as a polynomial in $\lead(f)$ is 
    called the {\em initial} of $f$. 
    This is similar to the leading coefficient in Gaussian elimination.
    \item The {\em separant} of $f$ is $\frac{\partial f}{\partial\lead(f)}$. 
   One can show that it is equal to the leading coefficient of any derivative of $f$.
    \item The {\em rank} of $f$ is $\rank(f) = \lead(f)^{\deg_{\lead(f)}f}$. 
    The ranks are compared first by $\lead$, and in the case of equality, by $\deg$.
    This is analogous to the leading variable in Gaussian  elimination/leading term in Gr\"obner bases.
    \item For $S \subset K\{z_1,\ldots,z_n\} \backslash K$, 
   the product
    of initials and separants of $S$ is denoted by $H_S$. 
   This is used in handling division with remainder algebraically.
  \end{itemize}
\end{notation}
\begin{defn}[Characteristic sets]
  \begin{itemize}[itemsep=-0.02in]
  \item[]
    \item For $f, g \in K\{z_1,\ldots,z_n\} \backslash K$, $f$ is said to be {\em reduced} w.r.t. $g$ if no proper derivative of $\lead(g)$ appears in $f$ and $\deg_{\lead(g)}f <\deg_{\lead(g)}g$.
    \item 
    A subset $\mathcal{A}\subset K\{z_1,\ldots,z_n\} \backslash K$
    is called {\em autoreduced} if, for all $p \in \mathcal{A}$, $p$ is reduced w.r.t. every  element of $\mathcal A\setminus \{p\}$. 
    Every autoreduced set is finite \cite[Section~I.9]{Kol}. 
    \item Let $\mathcal{A}=A_1<\ldots<A_r$ and $\mathcal{B} = B_1<\ldots<B_s$ be autoreduced sets ordered by their ranks (see Notation~\ref{not:rank}). We say that $\mathcal{A} < \mathcal{B}$ if
    \begin{itemize}[itemsep=0in,topsep=0in]
      \item $r > s$ and $\rank(A_i)=\rank(B_i)$, $1\leqslant i\leqslant s$, or
      \item there exists $q$ such that $\rank(A_q) <\rank(B_q)$ and, for all $i$, $1\leqslant i< q$, $\rank(A_i)=\rank(B_i)$.
    \end{itemize}
    \item An autoreduced subset of the smallest rank of a differential ideal $I\subset K\{z_1,\ldots,z_n\}$
    is called a {\em characteristic set} of $I$. One can show that every non-zero differential ideal in $K\{z_1,\ldots,z_n\}$ has a characteristic set.
  \end{itemize}
\end{defn}
\begin{defn}[Characteristic presentation] \label{def:char_pres} (cf. {\cite[Definition~3]{Boulier2000}})
 A polynomial is said to be {\em monic} if at least one of its coefficients is $1$. This is how monic is typically used in identifiability analysis and not how it is used in~\cite{Boulier2000}. A set of polynomials is said to be monic if each polynomial in the set is monic.
 
 Let $\mathcal C$ be a monic characteristic set of a prime differential ideal $P \subset K\{z_1,\ldots,z_n\}$.
 Let $N(\mathcal{C})$ denote the set of non-leading variables of $\mathcal{C}$.
    Then $\mathcal{C}$ is called a {\em characteristic presentation} of $P$ if all initials of $\mathcal C$ belong to $K[N(\mathcal C)]$  and 
   none of the elements of $\mathcal{C}$ has a factor in $K[N(\mathcal C)]$.
\end{defn}
 
 \subsection{Parameter identifiability for ODE models}

 Consider an ODE system of the form
 \begin{equation}\label{eq:ODEmodel}
   \Sigma = \begin{cases}
   \bar x' = \bar{f}(\bar x, \bar\mu,\bar u),\\
    \bar y = \bar{g}(\bar x, \bar\mu, \bar u),
    \end{cases}
   \end{equation}
where $\bar x$: a vector of state variables, $\bar y$: a vector of output variables,  $\bar\mu$: a vector of time-independent parameters, $\bar u$: a vector of input variables, and $\bar{f}$ and $\bar{g}$: tuples of elements of $\C(\bar{x}, \bar{\mu}, \bar{u})$.

Bringing $\bar{f}$ and $\bar{g}$ to the common denominator,  write $\bar{f} = \bar{F}/Q$ and $\bar{g} = \bar{G}/Q$, for $F_1, \ldots, F_n, G_1, \ldots, G_m, Q \in \C[\bar{x}, \bar{\mu}, \bar{u}]$.
Consider the (prime, see \cite[Lemma~3.2]{HOPY2018}) differential ideal
\[
 I_\Sigma := [Qx_i' - F_i, Qy_j - G_j,\, 1\leqslant i\leqslant n,\,1\leqslant j \leqslant m] \colon Q^\infty.
\]
Note that every solution of~\eqref{eq:ODEmodel} is a zero of every element of $I_\Sigma$.
 \begin{defn}[Generic solution, cf.~\cite{GL1990,G1992}] 
The image of $(\bar x,\bar y,\bar u)$ under the canonical homomorphism
\[
\C(\bar\mu)\{\bar x,\bar y,\bar u\} \to \C(\bar\mu)\{\bar x,\bar y,\bar u\}/I_\Sigma\]
is called the {\em generic solution} of~\eqref{eq:ODEmodel}.
\end{defn}
 
 Rigorously written definitions of identifiability in analytic terms can be found in \cite[Definition~1]{OPT19} and \cite[Definition~2.5]{HOPY2018}.  \cite[Proposition~3.4]{HOPY2018} implies that the following is an equivalent definition of identifiability, which we will use.
\begin{defn}[Identifiability] A function $h \in\C(\bar\mu)$ is said to be {\em (single-experiment, or SE-) identifiable } for~\eqref{eq:ODEmodel} if, 
for every generic solution $(\bar x^\ast,\bar y^\ast,\bar u^\ast)$ of~\eqref{eq:ODEmodel}, we have $h \in\C\langle\bar y^\ast,\bar u^\ast\rangle$.
\end{defn}

\begin{defn}[Input-output equations,  cf. {\cite[Definition~4.1]{ME2019}}]\label{def:ioeq} 
For a fixed
differential ranking $>$ on $(\bar y,\bar u)$, the set of {\em input-output equations  (IO-equations)} of the system $\Sigma$ from~\eqref{eq:ODEmodel} is the characteristic presentation of 
$I_\Sigma \cap \mathbb{C}\{\bar{y}, \bar{u}\}$.

It can be computed by computing the characteristic presentation $\mathcal{C}$ of $I_\Sigma$ with respect to the differential ranking that is compatible with $>$ and in which any derivative from $\bar x$ is greater than any derivative from $(\bar y,\bar u)$, and
returning $\mathcal{C}\cap \mathbb{C}\{\bar{y}, \bar{u}\}$ (e.g., by the Rosenfeld-Gr\"obner algorithm~\cite{BLOP09}).
\end{defn}


In the Sections~\ref{sec:ce} and~\ref{sec:me}, we will present our two main results, Theorem~\ref{prop:charset} and Theorem~\ref{thm:me}. The former is the main theoretical ingredient for our Algorithm~\ref{alg:main} to find all single-experiment identifiable functions of parameters. The latter is a key to calculating a bound for a sufficient number of experiments to check identifiability of multi-experiment identifiable functions of parameters. Both main results are used in our software implementations referenced in the introduction. We prove our main results in \ref{sec:pfs}.
\section{Main result: Single-experiment identifiability}\label{sec:ce}
In this section, we give an algorithm to compute all functions of the parameters that are identifiable from a single experiment for system~\eqref{eq:ODEmodel}.  
We begin with a construction  in Section~\ref{sec:prep}, which is a refinement of considering Wronskians of 
 monomials
(cf. \cite{G1992, DVJBNP01, MXPW11, OPT19}). 
Using this, we give an algebraic characterization,  Theorem~\ref{prop:charset} (our first main result), of the identifiable functions, which we
turn into Algorithm~\ref{alg:main}.  The proof of Theorem~\ref{prop:charset} can be found in~\ref{sec:pfmain1}.

\subsection{Preparation for Theorem~\ref{prop:charset}}\label{sec:prep}
To  find
the identifiable functions, 
we will begin with a new construction. Let $K$ be a differential field and $k$ a constant subfield such that $\mathbb{C} \subset k$ and let $\bar a = (a_1, \ldots, a_n) \in K^n$.
For  $p \in k\{\bar z\}$, where $\bar z = (z_1, \ldots, z_n)$, such that $p(\bar a) = 0$, we construct a subfield $F(p) \subset K$ as follows:
\begin{enumerate}[itemsep=-0.01in,topsep=0.01in]
    \item Let $W_p$ denote the Wronskian (see Notation~\ref{not:diffrings}\ref{def:Wr}) of the monomials of $p$ evaluated at $\bar a$.
    \item Define $F(p)$ to be the field generated over 
   $\mathbb{C}$
    by (the nonleading) entries in the reduced row echelon form of $W_p$.
\end{enumerate}

\begin{exmp} 
Let $K=\mathbb{C}(x)$, $n=2$, $\bar{a}=(x+1,2/(x+1))$, and $p = z_1' - 2z_1z_2 + 3$. 
Then the  monomials of $p$  are $z_1'$, $z_1z_2$, and $1$, their Wronskian and its evaluation at $\bar{a}$  are
\[\begin{pmatrix}
z_1'&z_1z_2&1\\
z_1''&(z_1z_2)'&0\\
z_1'''&(z_1z_2)''&0
\end{pmatrix}
\quad\text{and}\quad
W_p = \begin{pmatrix}
1&2&1\\
0&0&0\\
0&0&0
\end{pmatrix},
\]
which is already in reduced row echelon form, and so $F(p) = \mathbb{C}(1,2,1) = \mathbb{C}$.
For examples in which $F(p)$ is strictly greater than $\mathbb{C}$, see Example~\ref{ex:illustrate} and Section~\ref{subsec:sf}. There, evaluation of the Wronskian at a point is by differential ideal calculations.
\end{exmp}

For a tuple $\bar p \subset k\{\bar z\}$ of differential polynomials,  
\[F({\bar p}):= \mathbb{C}(F({p})\mid p \in \bar p).\]

\begin{lem}\label{lem:inQa}
  For every $p \in k\{\bar z\}$ such that $p(\bar a) = 0$, we have $F(p) \subset \mathbb{C}\langle \bar a \rangle$.
\end{lem}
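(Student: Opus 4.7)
The plan is to check, entry by entry, that everything used to build $F(p)$ already lives in the differential field $\mathbb{C}\langle \bar a\rangle$. There is no deep content here; the proof reduces to the observation that constructing $F(p)$ from $p$ and $\bar a$ uses only differentiation and field operations, both of which preserve $\mathbb{C}\langle\bar a\rangle$ as a subfield of $K$.

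First, write $p = \sum_{j=1}^N c_j m_j$ with $c_j \in k$ and $m_1,\ldots,m_N$ distinct differential monomials in $\bar z$, meaning each $m_j$ is a product of some $\delta^s z_i$ with coefficient $1$. Evaluating gives $m_j(\bar a)\in \mathbb{Z}[\delta^s a_i\colon s\geqslant 0,\, 1\leqslant i\leqslant n]\subset \mathbb{C}\langle\bar a\rangle$. The point to flag here is that although $p$ has coefficients in $k$, its \emph{monomials} do not, so $m_j(\bar a)$ cannot accidentally pick up anything from $k\setminus\mathbb{C}$. The $(i,j)$-entry of $W_p$ is $\delta^{i-1}m_j(\bar a)$, and since $\mathbb{C}\langle\bar a\rangle$ is a differential subfield of $K$ it is closed under $\delta$, so every entry of $W_p$ already lies in $\mathbb{C}\langle\bar a\rangle$.

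Next, invoke the general fact that reduced row echelon form is field-theoretic: it is produced by a finite sequence of row swaps, scalings of a row by a nonzero field element, and additions of a field-multiple of one row to another, and the output is uniquely determined by the input matrix. Performing these operations inside the subfield $\mathbb{C}\langle\bar a\rangle\subseteq K$ yields \emph{the} RREF of $W_p$, so every entry of the RREF — and in particular the nonleading ones that generate $F(p)$ — lies in $\mathbb{C}\langle\bar a\rangle$. Adjoining these over $\mathbb{Q}$ does not leave $\mathbb{C}\langle\bar a\rangle$ since $\mathbb{Q}\subset\mathbb{C}\subset\mathbb{C}\langle\bar a\rangle$, which gives the claim.

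No substantive obstacle is anticipated; the only care is in the first step, making sure $m_j(\bar a)$ genuinely involves only $\bar a$ and its derivatives and not the coefficient field $k$, and in the second step invoking the uniqueness (rather than computing) the RREF so as to transfer the computation from $K$ down to $\mathbb{C}\langle\bar a\rangle$.
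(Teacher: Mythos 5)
Your proof is correct and follows essentially the same route as the paper, whose entire argument is the observation that all entries of $W_p$ lie in $\mathbb{C}\langle\bar a\rangle$. You merely make explicit the two points the paper leaves implicit — that the monomials of $p$ evaluated at $\bar a$ involve only $\bar a$ and its derivatives, and that by uniqueness of the reduced row echelon form its entries stay in the differential subfield — both of which are accurate.
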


\begin{proof}
    Follows from 
    all  entries of $W_p$ being from $\mathbb{C}\langle \bar a \rangle$.
\end{proof}


\subsection{Statement of main result}\label{sec:main1}

We will now show that the problem of finding the field of identifiable functions is reduced to computing the intersection of fields of constants defined by their generators. This is a key step in  
Algorithm~\ref{alg:main} to find the field of all identifiable functions.
\begin{thm}[Single-experiment identifiability]\label{prop:charset}
  For system~\eqref{eq:ODEmodel}, the field of identifiable functions is equal to 
  \[
    \mathbb{C}(\bar\mu)\cap F(\bar p),
  \] 
  where $\bar p$ is a set of input-output equations of~\eqref{eq:ODEmodel} (Definition~\ref{def:ioeq}).
\end{thm}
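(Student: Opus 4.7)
\emph{Plan.} I would prove the two inclusions of the stated equality separately. The easier direction $\mathbb{C}(\bar\mu) \cap F(\bar p) \subseteq$ (identifiable field) follows directly from Lemma~\ref{lem:inQa}: applied to each $p \in \bar p$ with $\bar a := (\bar y^\ast, \bar u^\ast)$ a generic solution of~\eqref{eq:ODEmodel}, it yields $F(p) \subseteq \mathbb{C}\langle\bar y^\ast,\bar u^\ast\rangle$ and hence $F(\bar p) \subseteq \mathbb{C}\langle\bar y^\ast,\bar u^\ast\rangle$. Intersecting with $\mathbb{C}(\bar\mu)$ then gives $\mathbb{C}(\bar\mu) \cap F(\bar p) \subseteq \mathbb{C}(\bar\mu) \cap \mathbb{C}\langle\bar y^\ast,\bar u^\ast\rangle$, which is the identifiable field by the definition of identifiability.

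\emph{Hard direction.} For the reverse inclusion, let $h \in \mathbb{C}(\bar\mu)$ be identifiable, so $h \in \mathbb{C}\langle\bar y^\ast,\bar u^\ast\rangle$. The aim is $h \in F(\bar p)$. Write $h = N(\bar a)/D(\bar a)$ for some $N, D \in \mathbb{C}\{\bar y, \bar u\}$ with $D(\bar a) \neq 0$. Then $N - hD \in \mathbb{C}(\bar\mu)\{\bar y, \bar u\}$ vanishes at $\bar a$ and therefore lies in the prime differential ideal $P := I_\Sigma \cap \mathbb{C}(\bar\mu)\{\bar y, \bar u\}$. Pseudo-reducing $N - hD$ modulo the characteristic presentation $\bar p$ (saturating by the product $H_{\bar p}$ of initials and separants) rewrites $H_{\bar p}^k (N - hD)$ as a $\mathbb{C}(\bar\mu)\{\bar y, \bar u\}$-combination of the elements of $\bar p$ and their derivatives, supported on a controlled set of monomials of $\bar p$. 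The constant-in-$K$ coefficient vector of this rewriting lies in the $\Const(K)$-null space of the Wronskian $W_p$ for an appropriate $p \in \bar p$; this null space has a $\Const(K)$-basis produced by the RREF of $W_p$, whose entries by construction generate $F(p) \subseteq F(\bar p)$. Decomposing the coefficient vector in this basis forces $h$ to lie in $F(\bar p)$.

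\emph{Main obstacle.} The principal difficulty is controlling the pseudo-reduction so that $h$, which enters only implicitly as part of the coefficients $n_\alpha - h d_\alpha$ of $N - hD$, can be extracted as a Wronskian-null-space coordinate. This demands careful use of the properties of the characteristic presentation from Definition~\ref{def:char_pres}---in particular that all initials lie in $K[N(\mathcal{C})]$ and that no element of $\mathcal{C}$ has a factor in $K[N(\mathcal{C})]$---to ensure that the constants introduced by multiplication by initials and separants stay inside $F(\bar p)$ and do not overwhelm the identification of $h$. A secondary technical point is establishing $\Const(K) = \mathbb{C}(\bar\mu)$ at the generic solution (using time-independence of $\bar\mu$ and generic excitation of $\bar u$), which guarantees that the constants read off from the RREF are honestly in $\mathbb{C}(\bar\mu)$ and so are themselves identifiable.
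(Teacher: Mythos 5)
Your first inclusion is fine and is exactly the paper's argument via Lemma~\ref{lem:inQa}. The hard direction, however, has a genuine gap at its decisive step, and the route you sketch cannot be completed as stated. Knowing that a coefficient vector lies in the right kernel of a Wronskian $W_p$ --- a $\Const(K)$-subspace admitting a basis with entries in $F(p)$ --- does \emph{not} force the individual coordinates of that vector, let alone the scalar $h$ hidden inside the coordinates $n_\alpha - h d_\alpha$, to lie in $F(\bar p)$: a vector can sit in a subspace defined over $F$ without any of its entries being in $F$ (e.g.\ $(\pi,\pi)$ in the line $Z_1=Z_2$ over $\Q$). The one situation where "decompose in the basis" does pin things down is when the kernel is one-dimensional, i.e.\ when the Wronskian of the nonleading monomials is nonsingular --- but that is exactly the classical solvability condition, and the whole point of Theorem~\ref{prop:charset} is the degenerate case (cf.\ Example~\ref{ex:two} and the slow--fast example), where the kernel is larger and $F(p)$ is strictly smaller than the field generated by the normalized coefficients of $p$. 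So your plan, if carried out, would essentially reprove the known non-degenerate case. Your secondary point is also off: $\Const(K)=\C(\bar\mu)$ is false in general (in the paper's worked example $x$ is a constant of $K$ not in $\C(a,b)$, and $F(\bar p)=\C(ax+b,x)\not\subset\C(a,b)$), which is precisely why the theorem intersects $F(\bar p)$ with $\C(\bar\mu)$ instead of asserting containment.

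The paper closes this gap by an entirely different mechanism, a descent/automorphism argument rather than linear algebra: assuming some $b\in k\cap\C\langle\bar a\rangle$ is not in $F(\bar p)$, one passes to a differential closure $\Kdiff$, chooses an automorphism $\alpha$ of $\Const(\Kdiff)$ fixing $F(\bar p)$ pointwise with $\alpha(b)\neq b$, and extends it to a differential automorphism of $\Kdiff$. The Wronskian kernels being defined over $F(p_i)$ is used only to replace each $p_i$ by $\alpha$-invariant generators $r_{i,j}$ with coefficients in $F(p_i)$ whose radical differential ideal still captures the relation $H(bR_2-R_1)$; applying $\alpha$ and subtracting yields $H\,\alpha(H)\,R_2\,(\alpha(b)-b)$ vanishing at $\bar a$, and the contradiction is completed by the genuinely delicate step of proving $\alpha(H)(\bar a)\neq 0$ via the observation that the monomials of each initial/separant are, up to a common factor, among the monomials of $p_i$, so $F(\alpha(h))\subset F(p_i)$ and $\alpha(h)(\bar a)=0$ would force $h(\bar a)=0$. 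Nothing in your proposal plays the role of the automorphism or of this non-vanishing argument; I would recommend either adopting that strategy or explaining concretely how, in the degenerate-Wronskian case, your pseudo-reduction isolates $h$.
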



\begin{rem} Similarly to Theorem~\ref{thm:me}, the statement of Theorem~\ref{prop:charset} remains true if, in the calculation of $F(\bar p)$, for each $p$, one replaces the Wronskian of the monomials evaluated at $\bar a$ by the Wronskian of any $q_1,\ldots,q_n \in\mathbb{C}\{z\}$ evaluated at $\bar a$ such that $p=\sum_{i=1}^n c_iq_i$ for some $c_1,\ldots,c_n \in k$.
\end{rem}

\subsection{An algorithm for computing all identifiable functions}\label{sec:alg}
In this section, we present an algorithm that computes generators of the field of all identifiable functions of system~\eqref{eq:ODEmodel}. We also give an example following the algorithm step by step.
\begin{algorithm}[H]
\caption{Computing all identifiable functions}\label{alg:main}
\begin{description}
\item[Input] System $\Sigma$ as in~\eqref{eq:ODEmodel}
\item[Output] Generators of the field of identifiable functions of $\Sigma$
\end{description}

\begin{enumerate}[label = \textbf{(Step~\arabic*)}, leftmargin=7.5mm, align=left,itemsep=0in]
    \item\label{step:charset} Compute a set $\bar{p}$ of input-output equations of $\Sigma$ (see Definition~\ref{def:ioeq}).
    \item\label{step:wronskian} For each $p\in\bar p$, compute $\widetilde{W}_p$ the Wronskian of the monomials of $p$.
    Compute $W_p$ by replacing each $y_i^{(j)}$ in $\widetilde{W}_p$ with the $j$-th Lie derivative of $g_i$ with respect to $\Sigma$ ($g_i$'s are the same as in~\eqref{eq:ODEmodel}). 
    \item\label{step:row_echelon} For each $p\in\bar p$, calculate the reduced row echelon form of the matrix $W_p$ and let $F(\bar p)$ be the field generated over $\C$ by all non-leading coefficients of all matrices $W_p$.
    By \cite[Lemma~3.1]{HOPY2018} and Remark~\ref{rem:wronsk_const}, the generators of $F(\bar p)$ belong to $\mathbb{C}(\bar\mu,\bar x)$.
    \item\label{step:intersection} Apply
    Algorithm~\ref{alg:field_intersection} to find generators of $\mathbb{C}(\bar\mu)\cap F(\bar p)$.
    Return these generators.
\end{enumerate}
\end{algorithm}

\begin{rem}
In practice, the runtime of the algorithm depends on the chosen ranking, and it would be interesting to have a way to choose the ranking based on the problem.
\end{rem}

\begin{exmp}[Computing identifiable functions -- illustration]\label{ex:illustrate}
To illustrate, we will follow  Algorithm~\ref{alg:main} for the system:
\[
\Sigma=\begin{cases}
x_1'=0,\\
x_2' = x_1x_2 + ax_1u + bu,\\
y = x_2
\end{cases}
\]
where $\bar x = (x_1,x_2)$, $\bar y=(y)$,   $\bar\mu=(a,b)$, $\bar{u}=(u)$.
This system is a variant of the example from~\cite[Section~5]{second_paper}.
\begin{enumerate}[label = \textbf{(Step~\arabic*)}, leftmargin=1.5mm, align=left,itemsep=0in]
    \item
For the elimination differential ranking with $x_1>x_2>y>u$,  a calculation shows that
\[x_1y - ax_1u - y' - bu,\quad x_2-y,\quad yy'' - auy'' - y'^2 + au'y' - buy' + bu'y
\]
is a monic characteristic presentation for $I_\Sigma$.
Therefore, $\bar{p}=(p)$, where $p =yy'' - auy'' - y'^2 + au'y' - buy' + bu'y$. 
\item The Wronskian $\widetilde{W}_p = \Wrn(u'y, uy', u'y', y'^2, uy'', yy'')$ is computed 
(too large to be displayed here).
Then,  to compute $W_p$, all  derivatives of $y$ are replaced with the corresponding Lie derivatives of $x_2$, for example:
\begin{align*}
  y &\to x_2,\; y' \to x_1x_2 + ax_1u + bu,\\
  y'' &\to x_1(x_1x_2 + ax_1u + bu) + ax_1u' + bu'.
\end{align*}
\item  A calculation shows that the corresponding reduced row echelon form is:
\[
\begin{pmatrix}
1&0&0&0&-x_1&-ax_1-b\\
0&1&0&0&x_1&ax_1+b\\
0&0&1&0&1&0\\
0&0&0&1&0&1\\
0&0&0&0&0&0\\
0&0&0&0&0&0
\end{pmatrix}.
\]
Therefore, $F(\bar{p})=\mathbb{C}(ax_1+b,x_1)$. 
\item By Theorem~\ref{prop:charset}, the field of identifiable functions is
\[
  \mathbb{C}(a, b)\cap \mathbb{C}(ax_1 + b, x_1).
\]
Applying Algorithm~\ref{alg:field_intersection}, we find that
\begin{equation}\label{eq:Caxb}
  \mathbb{C}(a, b)\cap \mathbb{C}(ax_1 + b, x_1) = \mathbb{C},
\end{equation}
so there are no nontrival identifiable functions in this model.
\end{enumerate}
\end{exmp}


\section{Main result: Multi-experiment identifiability}\label{sec:me}

In this section, we show that the coefficients input-output equations generate the field of multi-experiment identifiable function and
derive a generally {\em tight} upper bound for the number of independent experiments for system~\eqref{eq:ODEmodel} sufficient to recover the field of multi-experiment
identifiable functions of parameters.
These results are
stated in Section~\ref{sec:memain}  and proven in~\ref{sec:pfmain2}. We apply 
them
to specific examples  from the literature in Section~\ref{sec:examples}.
The tightness of the bound from the mathematical point of view is discussed in the appendix.

\subsection{Preparation for Theorem~\ref{thm:me}}\label{sec:memain} 

\begin{defn}[Input-output identifiable functions]  A function of parameters $h\in\mathbb{C}(\bar\mu)$ in system~\eqref{eq:ODEmodel} is said to be {\em input-output (IO) identifiable} if $h$ can be expressed as a rational function of the coefficients of the IO-equations of system~\eqref{eq:ODEmodel} (see Definition~\ref{def:ioeq}), see also  \cite[Definition~2]{OPT19} and
 \cite[Corollary~1]{ident-compare}. 
\end{defn}
As shown in \cite[Section~4.1]{ident-compare}, every identifiable function is input-output identifiable but not every input-output identifiable function is necessarily identifiable.

\begin{defn}[Multi-experiment identifiability, cf.~\cite{GenSSI2}]  
 A function of parameters $h\in\mathbb{C}(\bar\mu)$ in system~\eqref{eq:ODEmodel} is said to be {\em multi-experiment identifiable (ME-identifiable)}  if there exists $N \geqslant 1$ such that $h$ is identifiable in the following ``$N$-experiment'' system
\begin{equation}\label{eq:ME}
\Sigma_N:=\begin{cases}
\bar x_i'=f(\bar x_i,\bar\mu,\bar u_i),\\
y_i=g(\bar x_i,\bar\mu,\bar u_i),
\end{cases}
\quad 1\leqslant i\leqslant N.
\end{equation}
We also say that $h$ is $N$-experiment identifiable in this case.
\end{defn}
\begin{exmp}[Illustrating the definition] Consider the  system (intentionally simple to illustrate the definition)
\begin{equation}\label{eq:shartbound}
\begin{cases}
 x_1'=0\\
y_1=x_1\\
y_2=\theta x_1+\theta^2,
\end{cases}
\end{equation}
in which $\theta$ is the unknown parameter.
By \cite[Example~2.14]{HOPY2018}, $\theta$ is not (globally) identifiable. Consider now the corresponding $2$-experiment system
\[\begin{cases}
x_{1,1}' = x_{2, 1}' =0\\
y_{1,1}=x_{1,1}\\
y_{1,2}=\theta x_{1,1}+\theta^2\\
y_{2,1}=x_{2,1}\\
y_{2,2}=\theta x_{2,1}+\theta^2;
\end{cases}\]
 now $\theta =\dfrac{y_{2, 2} - y_{1, 2}}{y_{2, 1} - y_{1, 1}}$ and so is identifiable.
\end{exmp}

\begin{rem}\label{rem:SIAN}
SIAN~\cite{SIAN} 
(see also~\cite[Remark~2.3]{ident-compare}) 
is software that can check (SE-) global and local identifiability 
of any given function $h \in \C(\bar{\mu})$ of parameters
of an ODE model $\Sigma$.
If $h$ is globally ME-identifiable, then, running SIAN on models of the form $\Sigma_N$ (see~\eqref{eq:ME}) for $N = 1, 2, \ldots$, one will in principle eventually find this out.
However, if $h$ is not globally ME-identifiable,  one will not be able to conclude this from assessing SE-identifiability  of $\Sigma_N$ without a bound on the number of experiments (provided by Theorem~\ref{thm:me}).

On the other hand, one could use SIAN to find the sufficient number of experiments \textbf{given} a set of generators of the field of ME-identifiable functions.
Indeed, for each of these generators, there is an $N$ such that the generator is SE-identifiable in $\Sigma_N$, so the sufficient number of experiments can be taken as the maximum of these $N$s.
However, this approach works only if  generators of the field of ME-identifiable functions  are known in advance.
Theorem~\ref{thm:me} and an algorithm to compute IO-equations  (Definition~\ref{def:ioeq}) yield an algorithm to find such generators.
\end{rem}

\subsection{Statement of main result}
\begin{thm}[Multi-experiment identifiability]\label{thm:me} A function of parameters $h \in\mathbb{C}(\bar\mu)$ in system~\eqref{eq:ODEmodel} is multi-experiment identifiable if and only if it is input-output identifiable in system~\eqref{eq:ODEmodel}.

Moreover, if $h$ is multi-experiment identifiable, then, for all 
\[
  N\geqslant\max\limits_{1\leqslant i \leqslant m}(s_i-r_i+1),
\] 
$h$ is identifiable in the $N$-experiment system,  where 
 $s_i$ and $r_i$ are defined by the following: \begin{itemize}[itemsep=0in,topsep=0in]
\item $\bar p=p_1,\ldots,p_m$ is a set of input-output equations of system~\eqref{eq:ODEmodel}, and for all $i$, $1\leqslant i\leqslant m$,
\begin{itemize}[itemsep=0in,topsep=-0.05in]
\item we write 
\begin{equation}\label{eq:p_decomposition}
  p_i = f_{i, s_i + 1} + \sum\limits_{j=1}^{s_i} c_{i, j} f_{i, j},
\end{equation}
where $f_{i,j} \in\mathbb{C}\{\bar{y},\bar{u}\}$  and linearly independent over $\mathbb{C}$ (so, $s_i$ is the length of such a presentation of $p_i$ minus $1$), 
\item  $r_i:=
\rank\Wrn(f_{i,1}(\bar{y},\bar{u}),\ldots,f_{i,s_i}(\bar{y},\bar{u}))$ modulo $I_\Sigma$.
\end{itemize}
\end{itemize}
\end{thm}

\begin{exmp}[Degenerate Wronskian]\label{ex:two}
The goal of this intentionally  simple 
example is to demonstrate that the Wronskians in the theorem can indeed be singular.
Consider  system~\eqref{eq:shartbound} again.
A calculation shows that
\[\bar{p}=\left\{y_1',y_2-\theta y_1-\theta^2\right\}\]
is a set of IO-equations for~\eqref{eq:shartbound}.
Then $m=2$, $s_1$=0, and $s_2=2$. We have 
\[
  \Wrn(y_1,1)=
  \begin{pmatrix}
  y_1&1\\
  y_1'& 0
  \end{pmatrix}\mod I_\Sigma=\begin{pmatrix}
  x_1&1\\
  0& 0
  \end{pmatrix},
\] 
and so $r_2 = 1$.  From \cite[Example~2.14]{HOPY2018}, $\theta$ is not (globally) identifiable (so, we cannot take $N=1$). By Theorem~\ref{thm:me}, for all \[N \geqslant 2-1+1 =2,\]
the field of ME-identifiable functions
$\mathbb{C}(\theta,\theta^2)=\mathbb{C}(\theta)$
is $N$-experiment identifiable.
\end{exmp}

\begin{rem}
  In some works (e.g., \cite[Section~3.1]{Bearup2013}), it was suggested that the Wronskians of monomials in a characteristic set be always of corank one ($r_i = s_i$ in the notation of Theorem~\ref{thm:me}).
  As Example~\ref{ex:two} (see also Sections~\ref{subsec:sf} and \ref{sec:LV-mult}) shows, this is not the case.
\end{rem}

\subsection{Computational aspects}

\begin{rem}[Dependence on decomposition~\eqref{eq:p_decomposition}]
  For fixed input-output equations $p_1, \ldots, p_m$, the bound given by Theorem~\ref{thm:me} may depend on the choice of decomposition~\eqref{eq:p_decomposition}.
  In~\ref{app:rep5}, we give an algorithm to compute a representation yielding the best possible bound (compared to other representations).
  We use this algorithm in our implementation.
\end{rem}

\begin{rem}[Computing the bound]\label{rem:alg_bound} 
The rank of the Wronskian matrix from
Theorem~\ref{thm:me} can be found by:
\begin{enumerate}[itemsep=-0.04in,topsep=0.02in]
\item Calculating the Wronskian matrix in $\bar y,\bar u$, 
\item For each matrix entry, computing its differential remainder \cite[Section~I.9]{Kol}
 with respect to the characteristic set defined by $\Sigma$, and 
\item  Applying a (symbolic) algorithm for rank computation. 
\end{enumerate}
The correctness follows from~\cite[Lemma~3.1]{HOPY2018}.
Before computing the rank, one can evaluate the Wronskian at a point. 
Since the rank cannot increase after an evaluation, the resulting bound will always be correct although might be larger than the bound from~Theorem~\ref{thm:me}.
\end{rem}

\begin{rem}\label{rem:const}
The bound for $N$ from Theorem~\ref{thm:me} can be improved if some of the output variables are constant as discussed in Section~\ref{sec:LV-mult}.
Constant outputs arise, e.g., to encode the case of constant inputs, which is common in some application domains~\cite{VECB}.
The general idea of the refinement is first to treat the constant outputs as parameters, apply Theorem~\ref{thm:me} to the rest of the outputs, and then use simultaneous rational interpolation to extract the coefficients with respect to the constant outputs.
\end{rem}

\section{Examples}\label{sec:examples}

We illustrate our results with 3 examples.
In  Section~\ref{subsec:two_comp}, Lotka-Volterra model with control, we show that the SE-identifiable and ME-identifiable functions coincide, so one can find the generators of the field of identifiable functions from the coefficients of the IO-equations.
The second example (Section~\ref{subsec:sf}) is a chemical reaction exhibiting the slow-fast ambiguity~\cite{VR1988}. 
Here, the bound from Theorem~\ref{thm:me} is exact, and yields that all parameters are identifiable from 2 experiments.
 In Section~\ref{sec:LV-mult}, we show another Lotka-Volterra model, for which some of the parameters become identifiable only after 2 experiments.
For other models with more ME-identifiable functions than SE-identifiable ones, we refer to~\cite[Section~III]{VECB}.
Finally, in Section~\ref{sec:SEIR}, we apply our results to the SEIR epidemiological model studied in~\cite{TL18}.

All the computations for the examples in this section can be performed automatically using our implementation. The corresponding files can be found in the \texttt{examples} folder in the repository~\url{https://github.com/pogudingleb/AllIdentifiableFunctions}.

\subsection{Lotka-Volterra model with control}\label{subsec:two_comp}
Consider the following system
\[
\Sigma= \begin{cases}
x_1'= a x_1 - bx_1x_2,\\ 
x_2' = -cx_2 + dx_1x_2+eu,\\
y=x_1,
\end{cases}
\]
in which $a,b,c,d,e$ are the unknown parameters and $u$ is the input (control).
With Theorem~\ref{thm:me}, we  show that, for this model, the fields of SE-identifiable and of ME-identifiable functions coincide.
A computation shows that the IO-equation is:
\[
\bar{p} = (yy'' - y'^2 - dy^2y' + cyy' + ady^3 - beuy^2 - acy^2),
\]
so, in the notation of Theorem~\ref{thm:me}, $m=1$ and, for $f_1 = y^2y'$, $f_2=yy'$, $f_3 = y^3$, $f_4 = uy^2$, $f_5 = y^2$, and $f_6 = yy''-y'^2$, we have $s_1 = 5$. 
A computation shows that 
\[
 r_1 := \rank(\Wrn(f_1,f_2,f_3,f_4,f_5) \mod I_\Sigma)=5.
\]
By Theorem~\ref{thm:me}, for any \[
N \geqslant 5-5+1 =1
\] 
the ME-identifiable functions are identifiable from $N$ experiments (cf. \cite[Main Results~1 and~2]{OPT19}).
In particular, 1 experiment is sufficient.
Hence, by Theorem~\ref{thm:me}, the field of  SE-identifiable functions is $
\mathbb{C}(d, c, ad, be, ac)= \mathbb{C}(a, be, c, d)$.


\subsection{Slow-fast ambiguity in chemical reactions}\label{subsec:sf}
In this example, we consider the system \cite[Section~A.1,equation~(3)]{SIAN}.
This system originates from the following chemical reaction network 
\cite[equation~(1.1)]{VR1988}:
\[
A \xrightarrow{k_1} B \xrightarrow{k_2} C.
\]
Then the quantities $x_A, x_B$, and $x_C$ of species satisfy the  system:
\begin{equation}\label{eq:chem}
\begin{cases}
  x_A' = -k_1 x_A,\\
  x_B' = k_1 x_A - k_2 x_B,\\
  x_C' = k_2 x_B.
\end{cases}
\end{equation}
The observed quantities will be
\begin{itemize}[itemsep=-0.07pt,topsep=0.005in]
    \item $y_1 = x_C$, the concentration of $C$;
    \item $y_2 = \varepsilon_A x_A + \varepsilon_B x_B + \varepsilon_C x_C$, which may represent a property of the mixture, e.g. absorbance or conductivity~\cite[p.~701]{VR1988}.
\end{itemize}
As explained in~\cite[p.~701]{VR1988}, in practice, $x_B$ might be hard to isolate, so $\varepsilon_B$ is also an unknown parameter, while the values $\varepsilon_A$ and $\varepsilon_C$ can be assumed to be known but could depend on $A$, $C$, and the details of the experimental setup.
The assumption that $\varepsilon_A$ and $\varepsilon_C$ are known can be encoded into the ODE system by making them state variables with zero derivatives and adding outputs to make them observable.
This will yield the following final ODE model (the same as~\cite[Section~A.1,equation~(3)]{SIAN}):
\begin{equation}\label{eq:chem_main}
\Sigma =\begin{cases}
  x_A' = -k_1 x_A,\\
  x_B' = k_1 x_A - k_2 x_B,\\
  x_C' = k_2 x_B,\\
  \varepsilon_A' = \varepsilon_C'=0,\\
  y_1 = x_C,\\
  y_2 = \varepsilon_A x_A + \varepsilon_B x_B + \varepsilon_C x_C,\\
  y_3 = \varepsilon_A,\\
  y_4 = \varepsilon_C,
\end{cases}
\end{equation}
where $\bar{x} = (x_A, x_B, x_C, \varepsilon_A, \varepsilon_C)$, $\bar{y} = (y_1, y_2, y_3, y_4)$, and $\bar{\mu} = (k_1, k_2, \varepsilon_B)$.
As noted in~\cite{VR1988} (see also~\cite[Section~A.1]{SIAN}), this model has slow-fast ambiguity: it is possible to recover a pair of numbers $\{k_1, k_2\}$ from the observations but impossible to know which one is $k_1$ and which one is $k_2$.
A similar phenomenon occurs in epidemiological models, see~\cite[Proposition~2]{TL18}.

We start with assessing the \textbf{SE-identifiability} of the model~\eqref{eq:chem_main} using Algorithm~\ref{alg:main} to find the field of identifiable functions.
For~\ref{step:charset}, a calculation in {\sc Maple} shows that the following set $\bar p =\{p_1,p_2,p_3,p_4\}$ is a set  of IO-equations of~\eqref{eq:chem_main}:
\begin{gather*}
p_1=k_1k_2(y_2-y_1y_4) - \varepsilon_{B}k_1y_1' - k_2y_1'y_3 - y_1''y_3,\\
p_2= y_1''' + (k_1+k_2)y_1'' +  k_1k_2y_1',\quad p_3=y_3',\quad
p_4 =y_4'.
\end{gather*}
In~\ref{step:wronskian} and~\ref{step:row_echelon}, we compute the reduced row echelon forms of $W_{p_1} = \Wrn(y_2, y_1y_4, y_1', y_1'y_3, y_1''y_3)$ and 
$W_{p_2} = \Wrn(y_1'', y_1', y_1''')$ modulo the equations 
$\Sigma$ and obtain the matrices
\begin{gather*}
\begin{pmatrix}
1 & 0 & 0 & 0 & k_1k_2\\
0 & 1 & 0 & 0 & -k_1k_2\\
0 & 0 & 1 & \varepsilon_A & -(\varepsilon_Ak_2 + \varepsilon_Bk_1)\\
0 & 0 & 0 & 0 & 0\\
0 & 0 & 0 & 0 & 0
\end{pmatrix},\ \ \ 
\begin{pmatrix}
1 & 0 & -k_1k_2\\
0 & 1 & -(k_1+k_2)\\
0 & 0 & 0
\end{pmatrix},
\end{gather*}
respectively.
$W_{p_3}$ and $W_{p_4}$ are $1\times 1$ matrices with the reduces row echclon form $(1)$.
Therefore, 
\[
  F(\bar{p}) = \C(k_1 + k_2, k_1k_2, \varepsilon_A, \varepsilon_Ak_2 + \varepsilon_B k_1).
\]
Before going to~\ref{step:intersection}, we 
show that this intermediate result of computation can  provide additional insights, for example, recover the parameter transformation corresponding to the slow-fast ambiguity~\cite[equation~(1.3)]{VR1988}.
From the proof of Theorem~\ref{prop:charset},  $F(\bar{p})$ consists of identifiable constants.
So, any parameter transformation  induces an automorphism $\alpha$ of the constants over $F(\bar p)$.
Since $k_1 + k_2$ and $k_1k_2$ are identifiable, $\alpha(k_1) = k_1$ and $\alpha(k_2) = k_2$ or $\alpha(k_1) = k_2$ and $\alpha(k_2) = k_1$.
Consider the latter case.
Since $\varepsilon_A \in F(\bar{p})$, we have $\alpha(\varepsilon_A) = \varepsilon_A$.
Hence,
\[
\varepsilon_Ak_2 + \varepsilon_Bk_1 = \alpha(\varepsilon_Ak_2 + \varepsilon_Bk_1) = \varepsilon_Ak_1 + k_2 \alpha(\varepsilon_B),
\]
so  $\alpha(\varepsilon_B) = \varepsilon_A  + \frac{k_1(\varepsilon_B - \varepsilon_A)}{k_2}$, giving the transformation~\cite[(1.3)]{VR1988}:
\begin{equation}\label{eq:transform}
k_1 \to k_2,\  k_2 \to k_1,\  \varepsilon_A \to \varepsilon_A,\ \varepsilon_B \to \varepsilon_A  + \frac{k_1(\varepsilon_B - \varepsilon_A)}{k_2}.
\end{equation}
Finally, in~\ref{step:intersection}, we compute
\[
  \C(k_1, k_2, \varepsilon_B) \cap F(\bar{p}) = \C(k_1k_2, k_1 + k_2).
\]

Now we will consider 
model~\eqref{eq:chem_main} in the \textbf{multi-experiment setup} in which one is allowed to perform several experiments with the same $k_1, k_2, \varepsilon_B$ but different initial concentrations and $\varepsilon_A, \varepsilon_C$.
We will show that, in this setup, the ambiguity can be resolved by one extra experiment.
The first part of Theorem~\ref{thm:me} implies that the field of ME-identifiable functions is generated by the coefficients of $\bar{p}$, so it equals
\[
\C(k_1k_2, \varepsilon_Bk_1, k_2, k_1 + k_2) = \C(k_1, k_2, \varepsilon_B).
\]
Therefore, \emph{all} the parameters can be identified from several experiments.
Now we  use the bound from Theorem~\ref{thm:me} to find the number of experiments sufficient to make all the parameter identifiable.
In the notation of the theorem, for $i=1$, we  take 
\[
f_{1,1} = y_2-y_1y_4,\ \  f_{1,2} = y_1',\ \  f_{1,3}=y_1'y_3,\ \  f_{1,4}=y_1''y_3,
\]
and so $s_1=3$.
A calculation in {\sc Maple} shows that
\[
  r_1 = \rank\Wrn(f_{1,1},f_{1,2},f_{1,3})\mod I_\Sigma = 2.
\]
so the Wronskian does not always have full rank in practical examples either. 
For $i=2$,
  $f_{2,1} = y_1''$, $f_{2,2}=y_1'$,
so $s_2=2$, and 
\[
  r_2 = \rank\Wrn(f_{2,1},f_{2,2}) \mod I_{\Sigma} = 2.
\]
Finally, $f_{3,1} = y_3'$ and $f_{4,1} = y_4'$, and so $s_3 = s_4 = 0$. 
Thus, all parameters are $N$-identifiable for all 
\[
  N\geqslant \max(3-2+1,2-2+1,0-0+1,0-0+1)=2.
\]
This bound  
is tight because, as we demonstrated earlier, neither of the parameter is identifiable from a single experiment.


\subsection{Lotka-Volterra model with ``mixed'' output}\label{sec:LV-mult}

In this example, we will illustrate the refinement of the bound on the number of experiments mentioned in Remark~\ref{rem:const} on the following variant of the Lotka-Volterra model:
\[
\Sigma= \begin{cases}
x_1'=ax_1-x_1bx_2,\\ x_2' =-cx_2+dx_1x_2,\\
y=ex_1+fx_2,
\end{cases}
\]
where we assume that $a,b,c,d,e$ are unknown parameters and $f$ is a known parameter that takes different values if multiple experiments are conducted. 
In the context of our differential algebra setup, this can be encoded as follows:
\begin{equation}\label{eq:LVmix}
\Sigma= \begin{cases}
x_1'=ax_1-bx_1x_2,\\ x_2' =-cx_2+dx_1x_2,\\
f' = 0,\\
y_1=ex_1+fx_2,\\
y_2 =f
\end{cases}
\end{equation}
Our implementation shows that the field of ME-identifiable functions is
\begin{equation}\label{eq:LVmix_gens}
  \mathbb{C}(a,\; b,\; c,\; d/e).
\end{equation}
In particular, $a, b, c$ are ME-identifiable, $d$ and $e$ are not but their ratio is.

We now discuss the number of experiments for globally identifying the functions~\eqref{eq:LVmix_gens}.
A straightforward application of Theorem~\ref{thm:me} yields a bound $35$ (Wronskian of dimension $51$ and rank $17$). 
35 could be viewed as a rather high number of experiments and is far from the actual number ($2$, as shown below).

We can get a better bound equal to $4$ using the same Theorem~\ref{thm:me} as follows.
Observe that, since $y_2$ is constant, then there will be the following input-output equations for the model: $y_2' = 0$ and $p = 0$, where $p$ is a differential polynomial $y_1$ and $y_2$ over $\mathbb{C}(\bar{\mu})$ of zero order in $y_2$.
We observe that, if one replaces $y_2$ in $p = 0$ with $f$, the resulting equations will be the input-output equations for the following simplified model, in which $f$ is considered as a scalar parameter:
\begin{equation}\label{eq:LVmix_simpl}
\Sigma= \begin{cases}
x_1'=ax_1-bx_1x_2,\\ 
x_2' =-cx_2+dx_1x_2,\\
y=ex_1+fx_2,\\
\end{cases}
\end{equation}
Our implementation shows that the bound for this model is one, so SE-identifiable and ME-identifiable functions for this model are the same.
In particular the coefficients of the monic input-output equation of~\eqref{eq:LVmix_simpl} are identifiable from a single experiment.
These coefficients are rational function in $f$ over $\mathbb{C}(a, b, c, d, e)$.
We write them as $C_1/C, \ldots ,C_s/ C$, where $C, C_1, \ldots, C_s$ are polynomials in $f$ over  $\mathbb{C}(a, b, c, d, e)$, and $C$ is monic.
We denote the number coefficients not belonging to $\mathbb{C}$ in $C,C_1,\ldots,C_s$ by $n, n_1, \ldots, n_s$, respectively.
Then these coefficients can be determined uniquely from 
\[
  \max\left( n + \min\limits_{1\leqslant i\leqslant s} n_i, \max\limits_{1\leqslant i\leqslant s} n_i \right).
\]
evaluations for different values of $f$.
To show this, assume that $n_1 = \min\limits_{1\leqslant i\leqslant s} n_i$.
Then $n + n_1$ evaluations are sufficient to reconstruct coefficients of $C_1/C$ as a rational function in $f$.
Then, once the coefficients of $C$ are known, evaluations of $C_2 /C, \ldots, C_s / C$ can be used to find the coefficients of $C_2, \ldots, C_s$ via polynomial interpolation.
In this example, $n = 2$, $\min\limits_{1\leqslant i\leqslant s} n_i = 2$, and $\max\limits_{1\leqslant i\leqslant s} n_i = 4$, so four evaluations (that is, four experiments with different known values of $f$ will be enough).

The obtained bound $4$ is  close to the exact bound $2$, which can be obtained using Theorem~\ref{thm:me} and SIAN as follows.
Using SIAN, we obtain that $a, b, c, d/e$ are only locally identifiable (from one experiment), so $N > 1$. 
Running SIAN for 2 experiments shows that the functions~\eqref{eq:LVmix_gens} are 2-experiment globally identifiable.
Since from Theorem~\ref{thm:me} we know that these functions generate all ME-identifiable functions, we conclude that $N = 2$.
Replication of the system makes it substantially more challenging for SIAN, so this approach might be impractical if $N$ is large, while computing the bound above may be feasible.


\subsection{SEIR epidemiological model}\label{sec:SEIR}
Structural identifiability of the following epidemiological model has been considered in~\cite[Equation~2.2]{TL18}
\begin{equation}\label{eq:SEIR}
    \begin{cases}
      S' = -\beta\frac{SI}{N},\\
      E' = \beta \frac{SI}{N} - \eta E,\\
      I' = \eta E - \alpha I,\\
      R' = \alpha I,
    \end{cases}
\end{equation}
where $N$ is the total population which is constant and known.
The following  two setups are considered in~\cite{TL18}:
\begin{itemize}
    \item \emph{Prevalence observation.}
    In this case, these is an output $y_1 = I$.
    We also add $y_2 = N$ to account for the fact that $N$ is known.
    Our implementation shows that the bound from Theorem~\ref{thm:me} is equal to one, so the fields of ME-identifiable and SE-identifiable functions coincide.
    It also finds that these fields are equal to $\mathbb{C}(\alpha \eta, \alpha + \eta, \beta, N)$.
    
    \item \emph{Cumulative incidence observation.}
    In this case, the observed quantity is $\int \eta E \operatorname{dt}$.
    This can be encoded by introducing a new state variable $C$ with $C' = \eta E$ and 
    the outputs $y_1 = C$ and $y_2 = N$.
    Our algorithm again shows that the fields of ME-identifiable and SE-identifiable functions coincide and that they equal  $\mathbb{C}(\alpha, \beta, \eta, N)$, so all of the parameters are globally identifiable.
\end{itemize}
These results confirm the findings of~\cite{TL18} obtained from analysis of input-output equations (that is, for ME-identifiability) and show that they are valid for SE-identifiability as well. 

\section*{Acknowledgments}
We are grateful to Julio Banga and Alejandro Villaverde for learning from them about the importance of  multi-experiment parameter identifiability at the AIM workshop ``Identifiability problems in systems biology''. We thank the editors and referees for their useful comments and suggestions.
This work was partially supported by the NSF grants  CCF-1564132, CCF-1563942, DMS-1760448, DMS-1760413, DMS-1853650, DMS-1665035, DMS-1760212, DMS-1853482, and DMS-1800492, and by the Paris Ile-de-France Region.

\bibliographystyle{abbrvnat}
\bibliography{bibdata}

\appendix
\section{Proofs of the main results}\label{sec:pfs}
\subsection{Single-experiment identifiability}\label{sec:pfmain1}
In this section, we will prove our first main result, Theorem~\ref{prop:charset}.
\begin{rem}[$F(p)$ is generated by first integrals]\label{rem:wronsk_const}
  Lemma~\ref{lem:wronsk_const} implies that $F(\bar{p})$ consists of constants (that is, first integrals).
  \cite[Lemma~2]{OPT19} implies that  $F(\bar{p}) \subset \mathbb{C}(\bar{\mu}, \bar{x})$.
\end{rem}

\begin{lem}\label{lem:wronsk_const}
  In the setup of Section~\ref{sec:prep}, $F(p) \subset \Const(K)$.
\end{lem}

\begin{proof}
   We will show that the space of linear relations between columns of $W_p$ is defined over $\Const(K)$. This will imply that all the entries in the reduced row echelon form of $W_p$ are constants.
   Let $X$ be a maximal set of linearly independent columns of $W_p$, and let $Y$ denote the rest of the columns.
   The monomials corresponding to $X$ are linearly independent over $\Const(K)$ because any such dependence would yield a dependence of $X$.
   For each $v \in Y$, there exists a column dependence of $X \cup \{v\}$, unique up to scaling.
   These dependencies span the space of column dependencies of $W_p$.
   \cite[Chapter~II, Section~1, Theorem~1]{Kol} implies that the monomials corresponding to $X \cup \{v\}$ are dependent over $\Const(K)$.
   Therefore, the corresponding column dependence can be chosen to be over $\Const(K)$ as well.
\end{proof}

\begin{proof}[Proof of Theorem~\ref{prop:charset}]
To prove the theorem, 
we will show that, for all differential fields $k\subset K$ with $\mathbb{C} \subset  k \subset \Const(K)$ and $k$ being algebraically closed in $K$, every $n$, and every tuple $\bar a \in K^n$,
\[k\cap F({\bar p})  = k\cap \mathbb{C}\langle \bar a \rangle,\]
where $\bar p := \{p_1, \ldots, p_m\} \subset k\{z_1,\ldots,z_n\}$ is a characteristic set of the prime ideal of all differential polynomials vanishing at $\bar a$. This is then applied to $k=\mathbb{C}(\bar\mu)$ and the differential field $K$ generated over $k$ by the $(\bar y,\bar u)$-components (denoted by $\bar a$) of a generic solution of~\eqref{eq:ODEmodel}.

     Lemma~\ref{lem:inQa} implies that $k\cap F({\bar p})  \subset k\cap \mathbb{C}\langle \bar a \rangle$.
  Assume that \[k\cap \mathbb{C}\langle \bar a \rangle  \supsetneq k\cap F({\bar p}) \] and let $b \in k\cap \mathbb{C}\langle \bar a \rangle  \setminus k\cap F({\bar p}) $.
  
  Recall (see \cite[Section~2]{M1996}) that a differential field $K$ is 
  differentially closed if: for all $m$ and finite $G \subset K\{w_1,\ldots,w_m\}$, if there exists $L\supset K$ such that $G=0$ has a solution in $L$, then $G=0$ has a solution in $K$.
  Let $\Kdiff$ be a differential closure of $K$, that is, a differentially closed field containing $K$ that embeds into any other differentially closed field containing $K$.
  
  We have $\Kdiff\supset \kacl$, the algebraic closure of $k$, and $\kacl \cap K = k$.
  Since $b \not\in F({\bar p})$, there exists an automorphism $\alpha\colon \Const(\Kdiff) \to \Const(\Kdiff)$ such that $\alpha|_{F({\bar p})} = \operatorname{id}$ and $\alpha(b) \neq b$.
  We pick such an $\alpha$ and extend it to a differential automorphism of $\Kdiff$ and denote the extension by $\alpha$ as well.
  
  For a vector $K$-subspace $V$ of $K^n$ with $\mathbb{C} \subset K$, we denote the field of definition of the subspace over $\mathbb{C}$ by $\operatorname{FD}(V)$.
Recall that $V$ has a $K$-basis $e_1, \ldots, e_\ell$ of $V$ such that $e_1, \ldots, e_\ell \in \operatorname{FD}(V)^n$.

  Fix $1 \leqslant i \leqslant m$. Let $V_{p_i}$ denote the right kernel of $W_{p_i}$.
 Note that $V_{p_i}$ is defined over $\Const(K)$. 
  Since $p_i(\bar a) = 0$, the vector of  coefficients of $p_i$ belongs to $V_{p_i}$.  Note that $\operatorname{FD}(V_{p_i})=F(p_i)$.
  By the preceding paragraph, there exist $r_{i, 1}, \ldots, r_{i, N_i} \in \operatorname{FD}(V_{p_i}) \{z_1, \ldots, z_n\}$ such that 
  \begin{itemize}[itemsep=-0.03in,topsep=0in]
      \item for every $1 \leqslant j \leqslant N_i$, the vector of the coefficients of $r_{i, j}$ belongs to $V_{p_i}$ (in particular, $r_{i, j} (\bar a) = 0$);
      \item $p_i$ is a $K$-linear combination of $r_{i, 1}, \ldots, r_{i, N_i}$.
  \end{itemize}
  
  Since $b \in \mathbb{C}\langle \bar a \rangle$, there exist differential polynomials $R_1, R_2 \in \mathbb{C}\{ \bar z\}$ such that $b = \frac{R_1(\bar a)}{R_2(\bar a)}$.
  We write $H = S_1\cdot\ldots\cdot S_m\cdot I_1\cdot \ldots\cdot I_m$, where $I_i$ and $S_i$ are the initial and separant of $p_i$.
  Since $b\in k$, $bR_2-R_1 \in k\{\bar x\}$. Since additionally  $bR_2(\bar a)-R_1(\bar a) =0$,
  \[
    H(b R_2 - R_1) \in \sqrt{[\bar p]}.
  \]
  Since, for every $1 \leqslant i \leqslant \ell$, $p_i \in [r_{i, 1}, \ldots, r_{i, N_i}]$, we have
  \begin{equation}\label{eq:ideal_membership_char}
    H(b R_2 - R_1) \in \sqrt{[r_{1, 1}, \ldots, r_{m, N_m}]}.
  \end{equation}
  We apply $\alpha$ to~\eqref{eq:ideal_membership_char} and use  that $r_{i, j}$'s are invariant under $\alpha$:
  \begin{equation}\label{eq:after_alpha_char}
        \alpha(H)(\alpha(b) R_2 - R_1) \in \sqrt{[r_{1, 1}, \ldots, r_{m, N_m}]}.
  \end{equation}
  We multiply~\eqref{eq:ideal_membership_char} by $\alpha(H)$ and~\eqref{eq:after_alpha_char} by $H$, and subtract.
  We obtain
  \[
    H \alpha(H) R_2 (\alpha(b) - b) \in \sqrt{[r_{1, 1}, \ldots, r_{m, N_m}]}.
  \]
  Every element of $\sqrt{[r_{1, 1}, \ldots, r_{m, N_m}]}$ vanishes at $\bar a$ since every $r_{i, j}$ vanishes at $\bar a$.
  Since $H(\bar a) \neq 0$ and $R_2(\bar a)(\alpha(b) - b) \neq 0$, it is sufficient to show that $\alpha(H)(\bar a) \neq 0$ to arrive at contradiction.

  Assume  there are $1 \leqslant i \leqslant m$ and $h \in \{S_i, I_i\}$ such that $\alpha(h)(\bar a) = 0$.
  Consider the sets $M_0$ and $M_1$ of monomials of $\alpha(h)(\bar a)$ (or, equivalently, of $h(\bar a)$) and $p_i(\bar a)$, respectively.
  Observe that there is a monomial $A$ in $\bar a$ with $AM_0 \subset M_1$ because
  \begin{itemize}[itemsep=-0.05in,topsep=0.05in]
      \item if $h = S_i$, take $A$  to be $\lead p_i(\bar a)$;
      \item if $h = I_i$,  take $A$  to be the appropriate
      power of $\lead p_i(\bar a)$.
  \end{itemize}
  As $AM_0 \subset M_1$, we have $F(AM_0) \subset F(p_i)$. 
Kernels of Wronskians are defined over the constants by Lemma~\ref{lem:wronsk_const}, so the kernel of the Wronskian of a tuple does not change if the tuple is multiplied by a nonzero element.
  Hence, $F(AM_0) = F(M_0) = F(\alpha(h))$, and so  $F(\alpha(h)) \subset F(p_i)$.
  Since $\alpha(h)(\bar a)=0$,  there are $r_1, \ldots, r_s \in  F(\alpha(h))\{\bar z\}=F(h)\{\bar z\} \subset 
  K\{\bar z\}$ and $\lambda_1, \ldots, \lambda_s \in K$ such that 
  \[
    \alpha(h) = \lambda_1r_1 + \ldots + \lambda_s r_s \text{ and } r_1(\bar a) = \ldots = r_s(\bar a) = 0.
  \]
  Applying $\alpha^{-1}$, we get 
  \[
  h = \alpha^{-1}(\lambda_1) r_1 + \ldots + \alpha^{-1}(\lambda_s) r_s,
  \]
 so $h(\bar a) = 0$, which is impossible, hence the  contradiction.
\end{proof}
\subsection{Multi-experiment identifiability}\label{sec:pfmain2}

\begin{proof}[Proof of Theorem~\ref{thm:me}]
For simplicity of notation, we  denote the tuple of variables $\bar{y},\bar{u}$ by $\bar{w}$.
Note that, for every $N \geqslant 1$, 
the set \[\bar{p}(\bar{w}_1),\ldots,\bar{p}(\bar{w}_N) \subset k\{\bar{w}_1,\ldots,\bar{w}_N\}\]
is a set of IO-equations of $\Sigma_N$. The coefficients of $\bar{p}(\bar{w}_1),\ldots,\bar{p}(\bar{w}_N)$  are also $c_{1,1},\ldots,c_{m,s_m}$.
Hence, as in \cite[Corollary~1 and Theorem~1]{ident-compare},
the field of $N$-experiment identifiable functions is contained in $\mathbb{C}(c_{1,1},\ldots,c_{m,s_m})$.

For the reverse inclusion, let $p \in \bar{p}$, 
\[
    p = \sum_{i=1}^sb_if_i + f_{s+1},
\]
where, for each $i$, $f_i \in \mathbb{C}\{\bar{w}\}$  and $f_1,\ldots,f_s$ are linearly independent over $\mathbb{C}$. 
By dividing $p$ by an element of $k$, we may assume that $\deg f_{s+1}=\deg p$. Let
\[
A :=
\begin{pmatrix}
f_1(\bar{a}_1)&\ldots&f_s(\bar{a}_1)\\
\vdots&\ddots&\vdots\\
f_1(\bar{a}_s)&\ldots&f_s(\bar{a}_s)
\end{pmatrix},
\]
where, for each $i$, $\bar a_i$ is the image of $\bar{w}_i$ modulo $I_{\Sigma_N}$.
We will first show that $\det A \ne 0$.
For this, let $M$ be a minimal (by size) zero minor of $A$. 
 Let, for some $i$ and $\ell$, $f_i(\bar{a}_\ell)$ appear in $M$ and  $q\in k\{\bar{w}\}$ be the differential polynomial obtained from $M$ by replacing  $f_j(\bar{a}_\ell)$ with $f_j(\bar{w})$, $1\leqslant j\leqslant s$. By the minimality of $M$  and linear independence of $f_1,\ldots,f_s$, $q(\bar{w})\ne 0$.
Since $q(\bar{a}_\ell)=0$, there exist $q_{i,j}\in k\{\bar{w}_j\}$ such that \begin{gather*}\forall i\, q_{i,1}(\bar{w}) \in I_\Sigma\text{ or }\ldots\text{ or }q_{i,s}(\bar{w})\in I_\Sigma\quad\text{and}\\ q=\sum_iq_{i,\ell}(\bar{w})\cdot\prod_{
{j=1}\atop{j\ne \ell}}^s q_{i,j}(\bar{a}_j).
 \end{gather*}
 Hence, there exist $\alpha$, and $q_1,\ldots,q_\alpha\in I_\Sigma$, and $b_1,\ldots,b_\alpha \in k\langle \bar{a}_1,\ldots,\bar{a}_{\ell-1},\bar{a}_{\ell+1},\ldots,\bar{a}_s\rangle$  such that $q = \sum\limits_{i=1}^\alpha b_iq_i$ and, for each $i$,
every monomial that appears in $q_i$ also appears in $q$  (and, therefore, in $p$).
Let $\tilde q$ be the primitive part of $q_{1}$ considered as a polynomial in its leader.  
Since $I_\Sigma$ is prime, $\tilde q \in I_\Sigma$. 
Since $\bar{p}$ is autoreduced and $\tilde q$ divides a linear combination of the monomials of $p$, the characteristic set $\tilde{\bar{p}}$ of  $\bar{p}\setminus \{p\} \cup \{\tilde q\}$ satisfies $\rank\tilde {\bar{p}} \leqslant \rank\bar{p}$. 
Hence, $\tilde{\bar{p}}$ is a characteristic set of $J$, and so \[\tilde{\bar{p}} = \bar{p}\setminus \{p\} \cup \{\tilde q\}.\] Thus, $\tilde{\bar{p}}$ is a characteristic presentation of $I_\Sigma$.  
If $\tilde q \ne q$, then $\deg\tilde q<\deg q$. If $\tilde q = q$, then $\tilde q$ has fewer monomials than $p$ does. 
Thus, in either case, $p/\tilde q \notin k$.
However, \cite[Theorem~3]{Boulier2000} implies that $p/\tilde{q} \in k$, which is a contradiction. 
This shows that $\det A\ne 0$.  Thus, the rows of $A$ are linearly independent.

Let $r = \rank \Wrn(f_1(\bar{a}),\ldots,f_s(\bar{a}))$ and the rows  $i_1=0,i_2,\ldots,i_r$ of the Wronskian be linearly independent.  
 Since the rows of $A$ form a basis of $\mathbb{C}\langle \bar{a}_1,\ldots,\bar{a}_s\rangle^s$, there exist rows $j_1,\ldots,j_{s-r}$ of $A$ such that they together with the rows $i_1,\ldots,i_r$ of the Wronskian form a basis of $\mathbb{C}\langle \bar{a}_1,\ldots,\bar{a}_s\rangle^s$ as well. 
 Hence,
 \[B := \begin{pmatrix}
f_1(\bar{a}_1)&\ldots&f_s(\bar{a}_1)\\
f_1^{(i_2)}(\bar{a}_1)&\ldots&f_s^{(i_2)}(\bar{a}_1)\\
\vdots&\ddots&\vdots\\
f_1^{(i_r)}(\bar{a}_1)&\ldots&f_s^{(i_r)}(\bar{a}_1)\\
f_1(\bar{a}_{j_1})&\ldots&f_s(\bar{a}_{j_1})\\
\vdots&\ddots&\vdots\\
f_1(\bar{a}_{j_{s-r}})&\ldots&f_s(\bar{a}_{j_{s-r}})\end{pmatrix},\]
is invertible. 
Replacing $\bar{a}_1,\bar{a}_{j_1},\ldots,\bar{a}_{j_{s-r}}$ in $\det B$ by the indeterminates $\bar{w}_1,\ldots,\bar{w}_{s-r+1}$, we obtain a differential polynomial with coefficients in $\mathbb{C}$ that does not belong to he vanishing ideal of $\bar{a}_1,\bar{a}_{j_1},\ldots,\bar{a}_{j_{s-r}}$. Since this ideal is the same as the vanishing ideal of $\bar{a}_1,\bar{a}_2,\ldots,\bar{a}_{s-r+1}$, we conclude that the matrix
 \[C :=  \begin{pmatrix}
f_1(\bar{a}_1)&\ldots&f_s(\bar{a}_1)\\
f_1^{(i_2)}(\bar{a}_1)&\ldots&f_s^{(i_2)}(\bar{a}_1)\\
\vdots&\ddots&\vdots\\
f_1^{(i_r)}(\bar{a}_1)&\ldots&f_s^{(i_r)}(\bar{a}_1)\\
f_1(\bar{a}_2)&\ldots&f_s(\bar{a}_2)\\
\vdots&\ddots&\vdots\\
f_1(\bar{a}_{s-r+1})&\ldots&f_s(\bar{a}_{s-r+1})\end{pmatrix}\]
is invertible.
Thus,
\[\begin{pmatrix}
b_1\\
\vdots\\
b_s
\end{pmatrix} = C^{-1}
\begin{pmatrix}f_{s+1}(\bar{a}_1)\\
f_{s+1}^{(i_2)}(\bar{a}_1)\\
\vdots\\
f_{s+1}^{(i_r)}(\bar{a}_1)\\
f_{s+1}(\bar{a}_2)\\
\vdots\\
f_{s+1}(\bar{a}_s)
\end{pmatrix},\]
which is in $\mathbb{C}\langle\bar{a}_1,\ldots,\bar{a}_{s-r+1}\rangle^s$ and so is $(s-r+1)$-experiment identifiable. Thus, the field of IO-identifiable functions is $N$-experiment identifiable.
\end{proof}

\section{Intersection for rational function fields}\label{sec:fi}

In this section, we will describe
how \cite[Algorithm~2.38]{Binder2009} can be used to compute the intersection $L_1\cap L_2$, where $L_1=\mathbb{C}(\bar\mu)$ and $L_2 = F(\bar p)$, as required by Algorithm~\ref{alg:main}.

Algorithm~\ref{alg:field_intersection} below is a version of~\cite[Algorithm~2.38]{Binder2009}.
It was shown~\cite[p.~37-38]{Binder2009} that the output of the algorithm is correct if the algorithm terminates.
Termination was proved only if both input fields are algebraically closed in the ambient rational function field.
To use the algorithm in our applications, we relax this condition to requiring only one of the fields to be algebraically closed ($\mathbb{C}(\bar{\mu})$ in $\mathbb{C}(\bar{\mu}, \bar{x})$) in Proposition~\ref{prop:termination}.

\begin{algorithm}[H]
\caption{Intersection of fields (a version of  \cite[Algorithm~2.38]{Binder2009})}\label{alg:field_intersection}
\begin{description}
\item[Input] Tuples $\bar{f} := (f_1, \ldots, f_s)$ and $\bar{g} := (g_1, \ldots, g_\ell)$ such that $f_1, \ldots, f_s, g_1, \ldots, g_\ell \in K(\bar{x})$, where $\bar{x} := (x_1, \ldots, x_n)$;
\item[Output] If terminates, returns generators of $K(\bar{f}) \cap K(\bar{g})$.
\end{description}

\textbf{Notation:} Introduce new variables $\overline{X} := (X_1,\ldots, X_n)$.
In the algorithm, for a set $S \subset K(\bar{x})[\overline{X}]$, $\langle S\rangle$ will denote the ideal generated by $S$ in $K(\bar{x})[\overline{X}]$.

\begin{enumerate}[label = \textbf{(Step~\arabic*)}, leftmargin=7.5mm, align=left,itemsep=0in]
    \item For every $1 \leqslant i \leqslant s$, write $f_i(\bar{x}) = \frac{n_i(\bar{x})}{d_i(\bar{x})}$ so that $n_i, d_i \in K[\bar{x}]$, and set $D(\bar{x}) := d_1\cdot\ldots\cdot d_s$;    
    \item Set $i := 1$, $I_1 := \langle 1 \rangle$ and 
    \[
      J_1 := \big\langle n_1(\overline{X}) - f_1(\bar{x})d_1(\overline{X}), \ldots, n_s(\overline{X}) - f_s(\bar{x})d_s(\overline{X}) \big\rangle \colon D(\overline{X})^\infty;
    \]
    
    \item While $I_i \neq J_i$ do
    \begin{enumerate}
        \item $I_{i + 1} := \langle J_i \cap K(\bar{g})[\overline{X}] \rangle$;
        \item $J_{i + 1} := \langle I_{i + 1} \cap K(\bar{f})[\overline{X}] \rangle$;
        \item $i := i + 1$;
    \end{enumerate}
    \item Compute any reduced Gr\"obner basis of $J_i$ and return its coefficients.
\end{enumerate}
\end{algorithm}

\begin{prop}[Termination of Algorithm~\ref{alg:field_intersection}]\label{prop:termination}
  In the notation of Algorithm~\ref{alg:field_intersection}, if $K(\bar{f})$  is algebraically closed in $K(\bar{x})$ or  $K(\bar{g})$ is algebraically closed in $K(\bar{x})$, then Algorithm~\ref{alg:field_intersection} terminates.
\end{prop}

\begin{lem}\label{lem:intersection_commutes}
  Let $I_0, I_1, \ldots ,I_s \subset K[\bar{x}]$, where $\bar{x} = (x_1, \ldots, x_n)$, be ideals such that $I_0 = I_1 \cap \ldots \cap I_s$, and let $L \subset K$ be a subfield.
  For $0 \leqslant j \leqslant s$, we define $J_j$ to be the ideal in $K[\bar{x}]$ generated by $I_j \cap L[\bar{x}]$.
  Then $J_0 = J_1 \cap \ldots \cap J_s$.
\end{lem}

\begin{proof}
  Since $I_0 \cap L[\bar{x}] = (I_1 \cap L[\bar{x}]) \cap \ldots \cap (I_s \cap L[\bar{x}])$, we have $J_0 \subset J_1 \cap \ldots \cap J_s$.
  
  Now we prove the reverse inclusion.
  Let $\{a_{\lambda}\}_{\lambda \in \Lambda}$ be an $L$-basis of $K$.
  Consider $b \in J_1 \cap \ldots \cap J_s$.
  We write $b = \sum\limits_{\lambda \in \Lambda} b_{\lambda}a_{\lambda}$, where $b_{\lambda} \in L[\bar{x}]$ for every $\lambda \in \Lambda$ and only finitely many of them are not zeroes.
  Consider any $1 \leqslant j \leqslant s$. 
  Since $J_j$ has a set of generators with coefficients in $L$, the inclusion $b \in J_j$ implies that $b_{\lambda} \in I_j \cap L[\bar{x}]$ for every $\lambda\in\Lambda$.
  Therefore, $b_{\lambda} \in I_0 \cap L[\bar{x}]$ for every $\lambda \in \Lambda$. Thus, $b \in J_0$.
\end{proof}

\begin{proof}[Proof of Proposition~\ref{prop:termination}]
We will assume that $K(\bar{f})$ is algebraically closed in $K(\bar{x})$.
The proof for the case of $K(\bar{g})$ being algebraically closed in $K(\bar x)$ is analogous.
Assume that the algorithm does not terminate.
By the construction, $I_j \supset J_j$ for every $j \geqslant 1$.
The ideals $I_1$ and $J_1$ are radical (the latter is due to~\cite[Definition~2.16 and Proposition~2.21]{Binder2009} and since the intersection of a radical ideal with a subring is radical and the extension of a radical ideal is radical). It follows then that all $I_i$'s and $J_i$'s are radical.
For every $i \geqslant 1$, we define $d_i$ to be the minimum of the dimensions of the prime components $P$ of $J_i$ such that $P \not\supset I_i$.
We will show that the sequence $d_i$ is strictly increasing thus arriving at a contradiction.

Fix $i \geqslant 1$. 
Let $P_1, \ldots, P_m$ be the prime components of $J_i$ so that $P_1, \ldots, P_r$ are the components of the dimension $< d_i$ and $P_{r + 1}, \ldots, P_m$ are the components of the dimension $\geqslant d_i$. 
By the construction, $J_i$ is defined over $K(\bar{f})$.
\cite[Proposition~2.37]{Binder2009} implies that $P_1, \ldots, P_m$ are also defined over $K(\bar{f})$.

Since $I_i \supset J_i$, and $P_1, \ldots, P_r$ contain $I_i$, $P_1, \ldots, P_r$ are exactly the prime components of $I_i$ of dimension $< d$, so $Q := P_1 \cap \ldots \cap P_r$ is the intersection of the equidimensional components of $I_i$ of dimensions $< d$.
Therefore, since $I_i$ is defined over $K(\bar{g})$, $Q$ is defined over $K(\bar{g})$.
Hence, 
\begin{equation}\label{eq:eq_for_Q}
Q = \langle Q \cap K(\bar{g})[\overline{X}]\rangle = \langle Q \cap K(\bar{f})[\overline{X}]\rangle \supset I_{i + 1}.
\end{equation}
Consider 
\[
\mathcal{C} := \{C \mid C \text{ is a prime component of } \langle P_j \cap K(\bar{g})[\overline{X}]\rangle \text{ for } j > r\}
\]
\cite{Seidenberg75} implies that, for every $j > r$, all prime components of $\langle P_j \cap K(\bar{g})[\overline{X}]\rangle$ are of the same dimension, so, for all $C \in \mathcal{C}$, $\dim C \geqslant d_i$.
For every $C \in \mathcal{C}$, denote $C' := \langle C \cap K(\bar{f})[\overline{X}] \rangle$.
\cite[Proposition~2.37]{Binder2009} implies that $C'$ is prime.
If $C \neq C'$, then $\dim C' > d_i$.
Otherwise, $C' = C \supset I_{i + 1}$.
Therefore, due to Lemma~\ref{lem:intersection_commutes}, we have:
\[
  J_{i + 1} = \langle I_{i + i} \cap K(\bar{f})[\overline{X}]\rangle = \underbrace{\left(\langle Q \cap K(\bar{f})[\overline{X}]\rangle \cap \bigcap\limits_{C \in \mathcal{C}, C = C'} C'\right)}_{=: A} \cap \underbrace{\left( \bigcap\limits_{C \in \mathcal{C}, C \neq C'} C'\right)}_{=: B}
\]
Since $\langle Q \cap K(\bar{f})[\overline{X}]\rangle \supset I_{i + 1}$ (see~\eqref{eq:eq_for_Q}), we have $A \supset I_{i + 1}$.
Since every component of $B$ has dimension at least $d_i + 1$, we deduce that $d_{i + 1} > d_i$.
\end{proof}


\section{Mathematical discussion for Theorems~\ref{prop:charset} and~\ref{thm:me}}

\begin{exmp}[Ranking dependency of $F(\bar{p})$ in Theorem~\ref{prop:charset}]\label{ex:dep}
We show that the field $F(\bar{p})$ from Theorem~\ref{prop:charset} can depend on the ranking although $\C(\bar{\mu}) \cap F(\bar{p})$ cannot.
Consider the following input-output equations
\[
  p_1 := y_1^2+y_2^2+y_3,\;\; p_2 := y_2'-1,\;\;  p_3 := y_3'-1.
\]
For the elimination differential ranking $y_1>y_2>y_3$, $p_1,p_2,p_3$ is the characteristic presentation of the prime differential ideal $P := \sqrt{[ p_1,p_2,p_3]}$. A calculation in {\sc Maple} shows that $F(p_1)=F(p_2)=F(p_3)=\mathbb{C}$, and so $F(\bar p)=\mathbb{C}$. However, a calculation in {\sc Maple} shows that $\bar q := \{q_1,q_2,q_3\}$, 
\begin{align*}
q_1&:= 2y_2 + 2y_1y_1' + 1,\\
q_2&:= 
4y_1^2y_1'^2 + 4y_1y_1' + 4y_1^2 + 4y_3 + 1,\\
q_3 &:= y_3' - 1,
\end{align*}
is the characteristic presentation of $P$ with respect to the elimination differential ranking $y_2>y_1>y_3$ and that $F(q_2)=\mathbb{C}(y_1y_1'+y_3)$ and $F(q_1)=F(q_3)=\mathbb{C}$, and so $F(\bar q)\supsetneq F(\bar p)$.
\end{exmp}

\begin{exmp}[Achieving the bound in Theorem~\ref{thm:me}]
A natural mathematical question about a bound is whether it is tight in the sense that the equality can be reached for all the values of the parameters appearing in the bound.
We will give an indication of the tightness of the bound from Theorem~\ref{thm:me} by providing, for every positive integers $h \leqslant n$, a model with $n + 1$ monomials in the IO-equations and the corresponding Wronskian having rank $h$ so that every element of the field of IO-identifiable functions is $(n - h + 1)$-identifiable but not necessarily $(n - h)$-identifiable.
Fix $h \leqslant n$ and consider the system
\begin{equation}\label{eq:sharp}
\Sigma=\begin{cases}
x_1'=c_1 + \sum\limits_{i=2}^n c_ix_i,\\
x_i^{(h)}=0,& 2\leqslant i \leqslant h\\
x_i'=0,&h+1\leqslant i\leqslant n\\
y_i=x_i,&1\leqslant i \leqslant n\end{cases}
\end{equation}
with unknown parameters $\{c_i, 1\leqslant i \leqslant n\}$.
By a calculation, \[\bar{p} = \Big\{y_1'-c_1 - \sum\limits_{i=2}^n c_iy_i,\,  
y_i^{(h)},\, 2\leqslant i \leqslant h,\,  
y_i',\, h+1\leqslant i\leqslant n\Big\}\]
is a set of IO-equations of~\eqref{eq:sharp}. We have modulo $I_\Sigma$:
\[
\Wrn(y_2,\ldots,y_n,1)=\begin{pmatrix}y_2&\ldots&y_h&y_{h+1}&\ldots&y_n&1\\
y_2'&\ldots&y_h'&0&\ldots&0&0\\
\vdots&\ddots&\vdots&\vdots&\ddots&\vdots&\vdots\\
y_2^{(h-1)}&\ldots&y_h^{(h-1)}&0&\ldots&0&0\\
0&\ldots&0&0&\ldots&0&0\\
\vdots&\ddots&\vdots&\vdots&\ddots&\vdots&\vdots\\
0&\ldots&0&0&\ldots&0&0
\end{pmatrix},\]
whose rank is $r_1$. On the one hand $r_1\leqslant h$ because the matrix has only $h$ non-zero rows. On the other hand, $\det\Wrn(y_2,\ldots,y_h,1) \notin I_\Sigma$ since $\Wrn(y_2,\ldots,y_h,1)$ is not reducible (to zero) by $\bar{p}$. Thus, $r_1=h$. Also, $s_1=n$ and, for all $i\geqslant 2$, $s_i =0$. So, by Theorem~\ref{thm:me}, for all \[N \geqslant s_1-r_1+1=n-h+1,\]
the field of IO-identifiable functions
$\mathbb{C}(c_1,\ldots,c_n)$
is $N$-experiment identifiable. We will show that it is not $(n-h)$-experiment identifiable, thus showing the desired tightness of the bound in Theorem~\ref{thm:me}. For this, consider the following set of IO-equations for the $(n-h)$-experiment system $\Sigma_{n-h}$:
\[
\bigcup_{j=1}^{n-h}\left\{y_{j,1}'-c_1 - \sum\limits_{i=2}^n c_iy_{j,i},\ \ 
{y_{j,i}^{(h)},\, 2\leqslant i \leqslant h\atop  
y_{j,i}',\, h+1\leqslant i\leqslant n}\right\}
\]
Let $a_{j,i}$ denote the image of $y_{j,i}$ modulo $I_{\Sigma_{n-h}}$.
Since, for all $i$ and $j$, $h+1\leqslant i\leqslant n$, $1\leqslant j\leqslant n-h$, $a_{j,i}$ is constant, we can define a differential field automorphism $\varphi$ of $\mathbb{C}\langle \bar{a}_1,\ldots,\bar{a}_{n-h}\rangle (c_1,\ldots,c_n)$ over $\mathbb{C}\langle \bar{a}_1,\ldots,\bar{a}_{n-h}\rangle$ by
\[\varphi(c_1,c_2,\ldots,c_h,c_{h+1},\ldots, c_n) := (c_1+b_{n-h+1},c_2\ldots,c_h,c_{h+1}+b_1,\ldots,c_n+b_{n-h}),\]
where
$(b_1,\ldots,b_{n-h+1}) \in \mathbb{C}\langle a_{j,i}\mid 1\leqslant j \leqslant n-h,\, h+1\leqslant i\leqslant n\rangle$ is a non-zero linear dependence among the columns of the $(n-h)\times(n-h+1)$ matrix
\[
\begin{pmatrix}
a_{1,h+1}&\ldots&a_{1,n}&1\\
\vdots&\ddots&\vdots&\vdots\\
a_{n-h,h+1}&\ldots&a_{n-h,n}&1
\end{pmatrix}.\]
Thus, there exists $i \in \{1,h+1,\ldots,n\}$ such that $c_i \notin \mathbb{C}\langle \bar{a}_1,\ldots,\bar{a}_{n-h}\rangle$, and so the IO-identifiable parameter $c_i$ is not $(n-h)$-experiment identifiable.
\end{exmp}


\section{Computing an optimal representation~\eqref{eq:p_decomposition} in Theorem~\ref{thm:me}}\label{app:rep5}

In this section, we prove Lemma~\ref{lem:rank_factor_optimal} providing a sufficient condition for a decomposition~\eqref{eq:p_decomposition} to yield the optimal (compared to other decompositions) bound in Theorem~\ref{thm:me}.
Then we give Algorithm~\ref{alg:poly_decomposition} to compute such a decomposition, which basically computes an LU-decomposition of a matrix in the language of polynomials (see the proof of Lemma~\ref{lem:rank_factor_optimal}).

\begin{lem}\label{lem:rank_factor_optimal}
  Let $p(\bar{z}) \in \C(\bar{\mu})\{\bar{z}\}$ be a differential polynomial over a constant field $\C(\bar{\mu})$ in $\bar{z} = (z_1, \ldots, z_n)$, where
  $\bar{\mu} = (\mu_1, \ldots, \mu_m)$ are transcendental constants.
  Let $I \subset \C(\bar{\mu})\{\bar{z}\}$ be a prime differential ideal containing $p$.
  Consider two representations of $p$
  \[
      p = f_{s + 1} + \sum\limits_{j = 1}^{s} c_{j} f_{j}\;\;\text{ and }\;\;p = \tilde{f}_{\tilde{s} + 1} + \sum\limits_{j = 1}^{\tilde{s}} \tilde{c}_{j} \tilde{f}_{j}
  \]
  such that $f_1, \ldots, f_{s + 1}, \tilde{f}_1, \ldots, \tilde{f}_{\tilde{s} + 1} \in \C\{\bar{z}\}$, $c_1, \ldots, c_s, \tilde{c}_1, \ldots, \tilde{c}_{\tilde{s}} \in \C(\bar{\mu})$, $f_1, \ldots, f_{s + 1}$ are $\C$-linearly independent, and $1, c_1, \ldots, c_s$ are $\C$-linearly independent.
  We define $r$ and $\tilde{r}$ to be the ranks of $\Wrn(f_1, \ldots, f_s)$ and  $\Wrn(\tilde{f}_1, \ldots, \tilde{f}_{\tilde{s}})$ modulo $I$, respectively.
  Then
  \[
    s - r \leqslant \tilde{s} - \tilde{r} \quad\text{ and }\quad s \leqslant \tilde{s}.
  \]
\end{lem}

\begin{proof}
  Viewing $\C(\bar{\mu})\{\bar{z}\}$ as a tensor product of $\C$-vector spaces $\C(\bar{\mu}) \otimes_\C \C\{\bar{z}\}$, we can consider $p$ as an element of this tensor product.
  Then the linear independence of $1, c_1, \ldots, c_s$ and of $f_1, \ldots, f_{s + 1}$ implies that 
  \[
    1\otimes f_{s + 1} + c_1\otimes f_1 + \ldots + c_s\otimes f_s
  \]
  is a full-rank factorization of $p$~\cite[Theorem~3.13]{Stewart}.
  Since $1\otimes \tilde{f}_{\tilde{s} + 1} + \tilde{c}_1\otimes \tilde{f}_1 + \ldots + \tilde{c}_{\tilde{s}}\otimes \tilde{f}_{\tilde{s}}$ is another rank-one factorization of the same tensor, the proof of~\cite[Theorem~3.13]{Stewart} implies that $1, c_1, \ldots, c_s$ belong to the $\C$-span of $1, \tilde{c}_1, \ldots, \tilde{c}_{\tilde{s}}$ and $f_1, \ldots, f_{s + 1}$ belong to the $\C$-span of $\tilde{f}_1, \ldots, \tilde{f}_{\tilde{s} + 1}$.
  The former inclusion implies $s \leqslant \tilde{s}$.
  The latter implies that there exists a full-rank $\C$-matrix $M$ such that $(f_1, \ldots, f_{s + 1}) = (\tilde{f}_1, \ldots, \tilde{f}_{\tilde{s} + 1})M$.
  Therefore, any  nontrivial linear dependence of the images of $f_1, \ldots, f_{s + 1}$ in $\C(\bar{\mu})\{\bar{z}\} / I$ over the constants of the fraction field of $\C(\bar{\mu})\{\bar{z}\} / I$ yields (after multiplying by $M$) such a relation for the images of $\tilde{f}_1, \ldots, \tilde{f}_{\tilde{s} + 1}$.
  Therefore, the proof of Lemma~\ref{lem:wronsk_const} implies that the corank of $\Wrn(f_1, \ldots, f_{s})$ modulo $I$, $r - s$, does not exceed the corank of $\Wrn(\tilde{f}_1, \ldots, \tilde{f}_{\tilde{s}})$ modulo $I$, $\tilde{r} - \tilde{s}$.
\end{proof}


\begin{algorithm}[H]
\caption{Computing optimal representation for Theorem~\ref{thm:me}}\label{alg:poly_decomposition}
\begin{description}
\item[Input] a monic polynomial $p(\bar{x}) \in \C(\bar{\mu})[\bar{x}]$ (see Definition~\ref{def:char_pres}), where $\bar{x} = (x_1, \ldots, x_n)$ and $\bar{\mu} = (\mu_1, \ldots, \mu_m)$ are independent indeterminates;
\item[Output] A representation of $p$ of the form
\[
  p = f_{s + 1} + \sum\limits_{j = 1}^{s} c_{j} f_{j}
\]
in which $f_1, \ldots, f_{s + 1} \in \C[\bar{x}]$ are $\C$-linearly independent and $1, c_1, \ldots, c_s \in \C(\bar{\mu})$ are $\C$-linearly independent.
\end{description}

Fix an arbitrary ordering on the monomials in $\bar{\mu}$.
The leading monomial and leading coefficient of a polynomial $f$ w.r.t. this ordering will be denoted by $\operatorname{lm} f$ and $\operatorname{lc} f$, respectively.

\begin{enumerate}[label = \textbf{(Step~\arabic*)}, leftmargin=7.5mm, align=left,itemsep=0in]
    \item Compute the LCM $q(\bar{\mu}) \in \C[\bar{\mu}]$ of the denominators of the coefficients of $p$. 
    Set $P(\bar{\mu}, \bar{x}) := q\cdot p \in \C[\bar{\mu}, \bar{x}]$.
    \item Write $P$ as $\sum\limits_{i = 1}^\ell C_i(\bar{\mu}) M_i(\bar{x})$, where $M_1, \ldots, M_\ell$ are distinct monomials in $\bar{x}$ and $C_1, \ldots, C_\ell \in \C[\bar{\mu}]$, and $C_1 = q$ (possible since $p$ is monic).
    \item Let $S$ be a list of pairs from $\C[\bar{\mu}] \times \C[\bar{x}]$ initialized to be empty.
    \item\label{step:S_loop} For every $i = 1, \ldots, \ell$, do
    \begin{enumerate}
        \item for every $(A, B) \in S$, where $A \in \C[\bar{\mu}], B \in \C[\bar{x}]$
        \begin{equation}\label{eq:update_S}
            C_i := C_i - \frac{c}{\operatorname{lc}(A)} A, \quad B := B + \frac{c}{\operatorname{lc}(A)} M_i,
        \end{equation}
        where $c$ is the coefficient in front of $\operatorname{lm}(A)$ in $C_i$.
        \item\label{step:CM_append} if $C_i \neq 0$, append $(C_i, M_i)$ to $S$.
    \end{enumerate}
    \item Let $S = [(A_0, B_0), (A_1, B_1), \ldots, (A_{s}, B_{s})]$.
    Return $f_{s + 1} = B_0$ and $f_i = B_i$ and $c_i = \frac{A_i}{q}$ for every $1 \leqslant i \leqslant s$.
\end{enumerate}
\end{algorithm}

\begin{lem}
  Algorithm~\ref{alg:poly_decomposition} is correct.
\end{lem}

\begin{proof}
  First we will show that $p = \sum\limits_{i = 1}^s \frac{A_i}{q}B_i + B_0$.
  By the construction, we will have $A_0 = Q$, so this is equivalent to proving $p = \sum\limits_{i = 0}^s \frac{A_i}{q}B_i$.
  To prove this, we observe that the transformation~\eqref{eq:update_S} preserves the value
  \[
    \sum\limits_{(A, B) \in S} A\cdot B + C_i M_i.
  \]
  Therefore, after the $i$-th iteration of the loop in~\ref{step:S_loop}, the value $\sum\limits_{(A, B) \in S} AB$ is increased by $C_iM_i$.
  Since it starts with zero, it will be equal to $\sum\limits_{j = 1}^\ell C_jM_j = P$ after~\ref{step:S_loop}.
  Therefore, $\sum\limits_{i = 0}^s \frac{A_i}{q}B_i = \frac{P}{q} = p$.
  
  To prove the $\C$-linear independence of $B_j$'s, for each $1 \leqslant j \leqslant s$, consider the pair $(C_i, M_i)$ that was the $j$-th appended pair for~\ref{step:CM_append}.
  Then $M_i$ will not appear in any of $B_{j + 1}, \ldots, B_{s}$, so $B_1, \ldots, B_s$ are $\C$-linearly independent.
  
  The linear independence of $A_0, \ldots, A_s$ follows from the fact that, $\operatorname{lm}(A_j)$ does not appear in $A_{j + 1}, \ldots, A_{s}$ for every $0 \leqslant j \leqslant s$, and this property is due to the reduction procedure~\eqref{eq:update_S}.
\end{proof}

\end{document}